\documentclass[aps,prd,onecolumn,eqsecnum,amsmath,nofootinbib,preprintnumbers,superscriptaddress]{revtex4-2}

\usepackage[dvipsnames]{xcolor}
\usepackage{color,graphicx,float,subfigure,xcolor}
\usepackage{amsfonts,amssymb,mathrsfs,times}
\usepackage{bm}
\usepackage{multirow}
\usepackage{mathtools}
\usepackage{dsfont}
\usepackage{setspace}
\usepackage{amsmath} 
\usepackage{graphicx}
\usepackage[colorlinks,linkcolor=blue,anchorcolor=blue,citecolor=blue]{hyperref}
\usepackage{ulem}
\usepackage{subcaption}
\usepackage{amsthm}
\newtheorem{theorem}{Theorem}
\usepackage[justification=raggedright,singlelinecheck=false]{caption}

\begin{document}
\title{Exceptional lines of Reissner-Nordstr\"{o}m-de Sitter black hole surrounded by a thin shell of matter}
\preprint{\hfill {\small {ICTS-USTC/PCFT-26-55}}}
\date{\today}

\author{Liang-Bi Wu}
\email{wulb@ucas.ac.cn}
\affiliation{School of Fundamental Physics and Mathematical Sciences, Hangzhou Institute for Advanced Study, UCAS, Hangzhou 310024, China}

\author{Yu-Sen Zhou}
\email{zhou\_ys@mail.ustc.edu.cn}
\affiliation{Interdisciplinary Center for Theoretical Study and Department of Modern Physics,\\
University of Science and Technology of China, Hefei, Anhui 230026, China}

\author{Ming-Fei Ji}
\email{jimingfei@mail.ustc.edu.cn}
\affiliation{Interdisciplinary Center for Theoretical Study and Department of Modern Physics,\\
University of Science and Technology of China, Hefei, Anhui 230026, China}

\author{Xia-Yuan Liu}
\email{liuxiayuan@mail.ustc.edu.cn}
\affiliation{Interdisciplinary Center for Theoretical Study and Department of Modern Physics,\\
University of Science and Technology of China, Hefei, Anhui 230026, China}

\author{Wen-Tao Fu}
\email{fuwentao2024@mail.ustc.edu.cn}
\affiliation{Interdisciplinary Center for Theoretical Study and Department of Modern Physics,\\
University of Science and Technology of China, Hefei, Anhui 230026, China}

\author{Li-Ming Cao}
\email[corresponding author: ]{caolm@ustc.edu.cn}
\affiliation{Interdisciplinary Center for Theoretical Study and Department of Modern Physics,\\
University of Science and Technology of China, Hefei, Anhui 230026, China}
\affiliation{Peng Huanwu Center for Fundamental Theory, Hefei, Anhui 230026, China}

\begin{abstract}
We study the exceptional line (EL) in the quasinormal modes (QNMs) of Reissner-Nordstr\"{o}m-de Sitter black hole surrounded by a static thin shell of matter. For a conformally scalar perturbation, we derive the QNM condition by matching the interior and exterior solutions across the shell and show that higher overtones are particularly sensitive to variations of the shell and background parameters. A mode permutation between two QNMs reveals an exceptional point (EP). After extending the parameter space, this degeneracy forms a continuous EL. We show that the spectral response near the line is intrinsically directional. For a perturbation in parameter space $\epsilon\widehat{\mathbf u}$, the QNM splitting is $|\omega_+-\omega_-| =C_{\widehat{\mathbf u}}\sqrt{\epsilon}+o(\sqrt{\epsilon})$, with $C_{\widehat{\mathbf u}}=(\widehat{\mathbf u}^{T}\mathbf{K}\widehat{\mathbf u})^{1/4}$ and $\mathbf{K}$ is so-called spectral sensitivity anisotropy matrix. The tangent direction of EL is a null direction of $\mathbf{K}$, so that the leading square-root splitting vanishes along the exceptional line, whereas the two principal directions in the normal plane exhibit different sensitivities. Furthermore, the square-root branch structure makes conventional linear QNM parametrizations singular near an EL. We therefore construct an EL adapted parametrization which incorporates both the local geometry of the line and the nonanalytic QNM splitting.
\end{abstract}

\maketitle

\section{Introduction}
When describing the natural oscillations of black holes under perturbation, quasinormal modes (QNMs) manifest as a discrete set of complex-valued frequencies. The real part indicates the oscillation frequency, while the imaginary part reflects the damping rate. As distinctive spectral fingerprints of black holes, QNMs provide a robust means for testing the Kerr hypothesis and probing gravitational dynamics in the strong-field regime. This research framework is known as black hole spectroscopy~\cite{Kokkotas:1999bd,Berti:2009kk,Konoplya:2011qq,Berti:2025hly}.

Recently, more and more advances have emphasized that the non-Hermitian (NH) nature of the QNM eigenvalue problem. Since QNMs are defined by purely ingoing conditions at the horizon and purely outgoing conditions at spatial infinity, the resultant spectrum problem is inherently non-Hermitian~\cite{Ashida:2020dkc}. For NH physics studies in gravitational systems, spectrum instability is an important topic, where initial study on it was provided by~\cite{Nollert:1996rf}. Methods for investigating spectrum instability in the context of QNMs broadly fall into two categories. One of them is modifying the effective potential, where these modifications can be either physically inspired (some environmental effects) or artificially constructed~\cite{Qian:2020cnz,Daghigh:2020jyk,Liu:2021aqh,Li:2024npg,Qian:2024iaq,Xie:2025jbr,Berti:2022xfj,Cheung:2021bol,Yang:2024vor,Courty:2023rxk,Cardoso:2024mrw,Ianniccari:2024ysv,MalatoCorrea:2025iuc,Shen:2025nsl,Wang:2025mxe,Hu:2025efp,Shen:2026hdl,Daghigh:2026lhk,Jia:2026vnx}. What's more, negative deformation of potential will lead to spacetime dynamical instability~\cite{Mai:2025mim,Ma:2026eaf}, where the underlying mechanism is provided in~\cite{Ma:2026eaf}. In addition, small changes in QNM boundary conditions can also lead to significant spectral shifts~\cite{Oshita:2025ibu,Solidoro:2024yxi,Destounis:2025dck,Wu:2025wbp}. The other approach is pseudospectrum analysis which give a qualitative insight of spectrum instability~\cite{Jaramillo:2020tuu,Destounis:2021lum,Cao:2024oud,Cao:2024sot,Arean:2024afl,Garcia-Farina:2024pdd,Arean:2023ejh,Boyanov:2023qqf,Cownden:2023dam,Carballo:2025ajx,Sarkar:2023rhp,Destounis:2023nmb,Luo:2024dxl,Warnick:2024usx,Chen:2024mon,Boyanov:2022ark,Siqueira:2025lww,Besson:2024adi,dePaula:2025fqt,Cai:2025irl,Cao:2025qws,Zhou:2025xdo,Jaramillo:2022kuv}.

NH physics is characterized not only by spectral instability but also by phenomena such as avoided crossings and exceptional points (EPs)~\cite{Heiss:1999alz,Heiss:2012dx}. In black hole perturbation theory, EPs have attracted increasing attention in recent years. Avoided crossings near EPs can induce resonant excitation between QNMs, non-Hermitian explanation for previously puzzling $(2,2,5)$ and $(2,2,6)$ QNM behaviors in Kerr spectrum~\cite{Motohashi:2024fwt,Lo:2025njp}. EPs have also been identified in massive scalar perturbations of Kerr black holes~\cite{Cavalcante:2024swt,Cavalcante:2024kmy,Cavalcante:2025abr}, in Regge-Wheeler potential with Gaussian bumps~\cite{Yang:2025dbn,Cao:2025afs}, in scalar QNMs of hairy black holes~\cite{Cheng:2026gxu}, in Kerr-dS at $(j,\Lambda)\simeq(0.896,0.034)$~\cite{Oshita:2025ibu}, and in Schwarzschild-dS with modified QNM boundary conditions~\cite{Wu:2025wbp}, as well as in dynamical Chern-Simons gravity~\cite{Hu:2025efp} and de Sitter braneworlds~\cite{Jia:2026ncd}. The enhanced spectral sensitivity near EPs has further been characterized through pseudospectra~\cite{Cao:2025afs}. In the time domain, the usual decomposition of the ringdown signal into a sum of independent QNMs can break down near an EP. At a second-order EP, a contribution proportional to $t\mathrm{e}^{-\mathrm{i}\omega t}$ appears~\cite{Yang:2025dbn,PanossoMacedo:2025xnf,Cheng:2026gxu,Imafuku:2026rpn}. More recently, Ref. \cite{Wu:2026chs} showed that nearly degenerate QNMs over a finite observation window are more naturally described by a common carrier mode supplemented by a jet-like correction proportional to $t\mathrm{e}^{-\mathrm{i}\omega t}$. This finite-time description was further investigated in Ref. \cite{Imafuku:2026rpn}, which further developed a Bayesian analysis. In higher-dimensional parameter spaces, EPs can form ELs, which have only recently been explored in gravitational systems~\cite{Cao:2025afs,Nakamoto:2026lyo,Cavalcante:2026vgr}, including Kerr-Newman black holes~\cite{Cavalcante:2026vgr}. Such ELs exhibit an intrinsically directional spectral response. Perturbations tangent to the EL leave the coalesced modes unsplit at leading order, whereas perturbations in the normal plane produce the characteristic $\epsilon^{1/2}$ splitting of a second-order EP, with the splitting strength depending on the perturbation direction~\cite{Cao:2025afs}.

In realistic astrophysical situations, black holes are not completely isolated systems. The surrounding environment, including dark matter distributions, accretion matter, scalar fields, and other forms of matter, can modify the effective geometry around the black hole and consequently leave imprints on the QNM spectrum. Thin shell models provide a simple yet physically motivated framework to describe environmental modifications of black holes. A thin shell can be regarded as an idealized representation of a localized matter distribution surrounding the black hole, separating two different spacetime regions while preserving analytical tractability. The theoretical foundation of thin shells was established by \textit{Israel}, who related the discontinuity of the extrinsic curvature across the shell to its surface stress-energy tensor through the junction conditions~\cite{Israel:1966rt}. The characteristics of black hole images in the presence of thin shells have been presented in~\cite{Cao:2025qzy,Li:2026tkf}. 

Recently, the ringdown of the Schwarzschild black hole surrounded by a thin shell has been studied in~\cite{Laeuger:2025zgb}. The presence of thin shells can also lead to spectrum instability. However, most previous studies have focused on the spectral shifts induced by environmental modifications, while the non-Hermitian structures associated with environmental effects remain largely unexplored. In this work, we investigate the QNM spectra and EL of Reissner-Nordstr\"{o}m-de Sitter (RN-dS) black hole surrounded by a thin shell. For simplicity, we consider a conformal scalar field which does not interact with the shell. This setup provides a natural framework for studying environment-induced NH phenomena: the shell modifies the effective spacetime geometry and introduces additional independent parameters, enlarging the parameter space in which exceptional points can emerge and extend into exceptional lines. Although Ref. \cite{Cao:2025afs} identified the anisotropic spectral response near an EL, a systematic measure of its directional dependence remains lacking. In this work, we use the thin shell of matter model in RN-dS black hole  to quantitatively characterize the anisotropy spectral response associated with EL.

The remainder of this work is organized as follows. In Sec. \ref{sec: sets_up}, we give a brief view on Reissner-Nordstr\"{o}m-de Sitter black hole surrounded by a thin shell of matter. In Sec. \ref{sec: EPs}, we give the condition to determine the QNM spectra and to find the EP. In Sec. \ref{sec: ELs}, the EL finding methods are given and directional spectral sensitivity for the EL is defined. Last, in Sec. \ref{sec: parameterization_EL}, parameterized QNMs formulation near the exception line is introduced. Sec. \ref{conclusions} is the conclusions and discussion. In Appendix \ref{app: Israel junction conditions}, Israel junction conditions are given. In Appendix \ref{app: Heun_equation}, we use the Heun's function formulation to solve the master equation. In Appendix \ref{thm: maximum_sensitivity_direction}, the so-called ``Maximum spectral sensitivity direction'' theorem is proved. In Appendix \ref{app: two_eigenvalues}, we give two eigenvalues of matrix $\mathbf{K}$ (see Eq. (\ref{eq:def_K})) with their numerical evaluations.

\section{Sets up}\label{sec: sets_up}
The $4$-dimensional Reissner-Nordstr\"{o}m-de Sitter (RN-dS) black hole is a solution of Einstein's equations. It describes an electrically charged black hole in a de-Sitter background. In this study, we focus on the case with a static thin shell exists in the RN-dS spacetime. Despite the presence of the thin shell, the spacetime on each side remains locally RN-dS, with the interior and exterior metrics given by~\cite{McManus:2020lgm,Kaplan:2018dqx}
\begin{eqnarray}\label{metric_in}
    \mathrm{d}s^2_{\text{in}}=-f_{\text{in}}(r)\mathrm{d}t^2_{\text{in}}+\frac{\mathrm{d}r^2}{f_{\text{in}}(r)}+r^2(\mathrm{d}\theta^2+\sin^2\theta\mathrm{d}\phi^2)\, ,\quad f_{\text{in}}(r)=-\Lambda\frac{(r-r_{n,\text{in}})(r-r_{-,\text{in}})(r-r_{+,\text{in}})(r-r_\text{c,in})}{3r^2}\, ,
\end{eqnarray}
and
\begin{eqnarray}\label{metric_out}
    \mathrm{d}s^2_{\text{out}}=-f_{\text{out}}(r)\mathrm{d}t^2_{\text{out}}+\frac{\mathrm{d}r^2}{f_{\text{out}}(r)}+r^2(\mathrm{d}\theta^2+\sin^2\theta\mathrm{d}\phi^2)\, ,\quad f_{\text{out}}(r)=-\Lambda\frac{(r-r_{n,\text{out}})(r-r_{-,\text{out}})(r-r_{+,\text{out}})(r-r_\text{c,out})}{3r^2}\, ,
\end{eqnarray}
where $r_{n,\text{in}}=-(r_{-,\text{in}}+r_{+,\text{in}}+r_\text{c,in})$ and $r_{n,\text{out}}=-(r_{-,\text{out}}+r_{+,\text{out}}+r_\text{c,out})$. Original RN-dS black hole allows for three horizons. It is instructive to use the roots of $f(r)$ to characterize the black hole. Therefore, two metric functions in Eq. (\ref{metric_in}) and Eq. (\ref{metric_out}) have written into the form of factorization~\cite{Miguel:2020uln}. The mass and charge parameters $M$ and $Q$ related to these three horizons are 
\begin{eqnarray}\label{M_function_and_Q2_function}
    M=\frac{(r_\text{c}+r_{-}) (r_\text{c}+r_{+}) (r_{-}+r_{+})}{2\Big[r_\text{c}^2+r_\text{c} (r_{-}+r_{+})+r_{-}^2+r_{-} r_{+}+r_{+}^2\Big]}\, ,\quad Q^2=\frac{r_\text{c}r_{-} r_{+} (r_\text{c}+r_{-}+r_{+})}{r_\text{c}^2+r_\text{c} (r_{-}+r_{+})+r_{-}^2+r_{-}r_{+}+r_{+}^2}\, .
\end{eqnarray}
The cosmological constant is
\begin{eqnarray}\label{cosmological_constant}
    \Lambda=\frac{3}{r_\text{c}^2+r_\text{c} (r_{-}+r_{+})+r_{-}^2+r_{-}r_{+}+r_{+}^2}\, .
\end{eqnarray}
We will not explicitly state ``in'' and ``out'' for the sake of simplicity in the above two equations. However, it should be noted that, since the cosmological constant is founded on theoretical grounds, we have no reason to allow it to differ on either side of the shell. In other words, Eq. (\ref{cosmological_constant}) imposes a constraint on the six quantities $r_{-,\text{in}}$, $r_{+,\text{in}}$, $r_\text{c,in}$, $r_{-,\text{out}}$, $r_{+,\text{out}}$ and $r_\text{c,out}$. The time coordinate $t_{\text{in}}$ in the interior of the shell is different from the time coordinate $t_{\text{out}}$ in the exterior. The connection between the temporal coordinates on the interior and exterior is derived by mandating the continuity of the metric across the shell positioned at $r=a$, where $a$ is a constant. This requires
\begin{eqnarray}\label{relation_between_tin_tout}
    \mathrm{d}t_{\text{in}}=\eta(a)\mathrm{d}t_{\text{out}}\, ,\quad \eta(a)\equiv\sqrt{\frac{f_{\text{out}}(a)}{f_{\text{in}}(a)}}\, .
\end{eqnarray}
As for constraints on the two side metric coefficients which are related to the energy momentum tensor on the shell, they can be obtained by using the Israel junction conditions~\cite{Israel:1966rt} (see Appendix \ref{app: Israel junction conditions}). 

So far, we have provided the geometric information inside and outside the thin shell. Now, we will consider the wave equation for a conformal scalar field, namely 
\begin{eqnarray}\label{equation_conformal_scalar_field}
    \Big(\square-\frac{1}{6}R\Big)\Phi=0\, ,
\end{eqnarray}
where $R$ is the Ricci scalar of the RN-dS spacetime, with $R=4\Lambda$. Suppose that $\Phi(t,r,\theta,\phi)=\Psi(t,r)Y_{\ell m}(\theta,\phi)/r$, we have the master perturbation equation in the time domain,
\begin{eqnarray}\label{master_equation_time_domain}
    \Big[\frac{\partial^2}{\partial t^2}-\frac{\partial^2}{\partial x^2}+V(x)\Big]\Psi(t,x)=0\, ,\quad \mathrm{d}x=\frac{\mathrm{d}r}{f(r)}\, ,
\end{eqnarray}
where the potential is~\cite{Ahmed:2016lou}
\begin{eqnarray}\label{effective_potential}
    V(r)=f(r)\Big[\frac{\ell(\ell+1)}{r^2}+\frac{2M}{r^3}-\frac{2Q^2}{r^4}\Big]\, .
\end{eqnarray}
In the frequency domain, adopting the Fourier ansatz $\mathrm{e}^{-\mathrm{i}\omega t}$, the perturbation equation (\ref{master_equation_time_domain}) can be reduced to the following Schr\"{o}dinger-like equation
\begin{eqnarray}\label{master_equation_frequency_domain}
    \Big[\frac{\mathrm{d}^2}{\mathrm{d}x^2}+\omega^2-V(x
    )\Big]\Psi(x)=0\, .
\end{eqnarray}
Inside and outside the thin shell, the coordinate times $t_{\text{in}}$ and $t_{\text{out}}$ are not synchronized. From Eq. (\ref{relation_between_tin_tout}), if a mode oscillates with frequency $\omega_{\text{in}}$ with respect to $t_{\text{in}}$, an outside observer will measure a different frequency $\omega_{\text{out}}$:
\begin{eqnarray}
    \mathrm{e}^{-\mathrm{i}\omega_{\text{in}}\cdot t_{\text{in}}}=\mathrm{e}^{-\mathrm{i}\omega_{\text{in}}\eta(a) \cdot t_{\text{out}}}=\mathrm{e}^{-\mathrm{i}\omega_{\text{out}}\cdot t_{\text{out}}}\quad \Rightarrow \quad \omega_{\text{out}}=\eta(a) \omega_{\text{in}}\, ,
\end{eqnarray}
This frequency difference is a key feature of the thin shell model. The shell not only modifies the spacetime geometry but also acts as an interface that rescales the time coordinate. As a result, the oscillation frequencies inside and outside are not the same. This phenomenon is explained as gravitational redshift~\cite{Laeuger:2025zgb}. Both inside and outside analytical solutions for the potential (\ref{effective_potential}) of the RN-dS black hole can be derived by the Heun's function method~\cite{Hatsuda:2020sbn,Noda:2022zgk,Chen:2025sbz,Chen:2024rov,Xia:2025hwt,Li:2026zsg,Mi:2025fbt,Li:2025lgn,Jiang:2025mcj,Oshita:2021iyn} (see more details in Appendix \ref{app: Heun_equation}). More importantly, in order to ensure that the influence of the shell is not trivial, considering that QNM problem usually study the region between the event horizon and the cosmological horizon, the shell position should satisfy the following condition 
\begin{eqnarray}\label{shell_position}
    \text{max}\big\{r_{+,\text{in}},r_{+,\text{out}}\big\}<a<\text{min}\big\{r_\text{c,in},r_\text{c,out}\big\}\, .
\end{eqnarray}
Therefore, the event horizon of the new black hole constructed through a thin spherical shell is $r_{+,\text{in}}$, the cosmological horizon is $r_\text{c,out}$, and the Cauchy horizon is still $r_{-,\text{in}}$. For $r_{+,\text{in}}<r<a$, the general solution is denoted by $C_{\text{in}}\cdot\Psi_{\text{in}}(\omega_{\text{in}},r)$ (see Eq. (\ref{Psi_in})) and for $a<r<r_{\text{c,out}}$, the general solution is denoted by $C_{\text{out}}\cdot\Psi_{\text{out}}(\omega_{\text{out}},r)$ (see Eq. (\ref{Psi_out})), where $C_{\text{in}}$ and $C_{\text{out}}$ are general constants. In this study, we will use the unit such that $r_{+,\text{in}}=1$.

\section{Quasinormal modes and exceptional points}\label{sec: EPs}
In this section, we study QNMs and the EP of the Reissner-Nordstr\"{o}m-de Sitter black hole surrounded by a thin shell of matter. Requiring the model to be simple while maintaining its physical properties, we consider a simple case where the scalar field and the thin shell have no coupling for the sake of solving the QNM spectra. Therefore, the conditions on the shell of $\Phi$ are $[[\Phi]]=0$ and $[[n^\mu\partial_\mu\Phi]]=0$ ($n^\mu$ is unit normal vector of the shell [see Eq. (\ref{normal_vectors})].), from which one gets the conditions in terms of the master variable $\Psi$ as follow
\begin{eqnarray}\label{QNMs_condition_eta_general}
    \det\mathcal{M}(\omega_{\text{in}})\equiv\det
    \begin{bmatrix}
        \Psi_{\text{out}}(\omega_{\text{out}},a) & -\Psi_{\text{in}}(\omega_{\text{in}},a)\\
        \eta(a)(\partial_r\Psi_{\text{out}})(\omega_{\text{out}},a)-\frac{\eta(a)}{a}\Psi_{\text{out}}(\omega_{\text{out}},a)&-(\partial_r\Psi_{\text{in}})(\omega_{\text{in}},a)+\frac{1}{a}\Psi_{\text{in}}(\omega_{\text{in}},a)
    \end{bmatrix}=0\, ,
\end{eqnarray}
where $\omega_{\text{out}}=\eta(a)\omega_{\text{in}}$. Equation (\ref{QNMs_condition_eta_general}) is derived from the condition that there are non-zero solutions to the linear equations of $C_{\text{out}}$ and $C_{\text{in}}$. For the case where the thin shell does not exist, we obtain $\eta=1$. So in such case, Eq. (\ref{QNMs_condition_eta_general}) will reduce to the QNM condition of a single RN-dS spacetime.  It means that the QNM spectra will be solved by the condition of linear dependence between $\Psi_{\text{in}}(\omega,r)$ and $\Psi_{\text{out}}(\omega,r)$, i.e., 
\begin{eqnarray}\label{QNMs_condition_eta_1}
    \mathcal{W}[\Psi_{\text{in}}(\omega,r),\Psi_{\text{out}}(\omega,r)]\Big|_{r=a}=0\, ,\quad \omega_{\text{in}}=\omega_{\text{out}}=\omega\, ,
\end{eqnarray}
in which the above Wronskian is defined as $\mathcal{W}[u(r),v(r)]\equiv u\mathrm{d}v/\mathrm{d}r-v\mathrm{d}u/\mathrm{d}r$. The zeros of function $\det\mathcal{M}(\omega_{\text{in}})$, i.e., the QNM spectra, are obtained by scanning the complex $\omega$ plane and then using the built-in function \textit{FindRoot} in \textit{Mathematica}~\cite{Wu:2025sbq,Wu:2025wbp,Wu:2026hvf}. For the sake of simplicity in notation, $\omega$ refers to $\omega_{\text{in}}$ in the followings.

We have $5$ freely adjustable parameters including $r_{\text{c},\text{in}}$, $r_{-,\text{in}}$, $r_{+,\text{out}}$, $r_{\text{c},\text{out}}$, and $a$. Here, we focus on the case with angular momentum number $\ell=2$. In order to find the EP, two horizons $r_{-,\text{in}}$ and $r_{\text{c,in}}$ of the inside spacetime are fixed as $r_{-,\text{in}}=0.8$ and $r_{\text{c,in}}=3$ in the following. In Fig. \ref{QNM_Spectra_rcout_migration}, we show the QNM spectra migrations with respect to the migration of $r_{\text{c,out}}$, where for the left panel, $a=1.25$ while for the right panel, $a=2$. As $r_{\text{c,out}}=r_{\text{c,in}}$, from Eq. (\ref{relation_between_tin_tout}), one gets $\eta=1$, which means that there is no thin shell. In other words, the QNM spectra degenerate into the QNM spectra of conformal scalar field in RN-dS black hole. So such spectra will not be influenced by the position of the thin shell, namely $a$. The QNM spectra results with no thin shell are marked as star in Fig. \ref{QNM_Spectra_rcout_migration}. The overtone numbers are assigned according to their imaginary parts at this reference parameter point and retained along the corresponding branches as the parameters vary. Furthermore, two boundaries of $r_{\text{c,out}}$ namely $r_{\text{c,out}}=2.853$ and $r_{\text{c,out}}=3.511$ comes from the requirements that the cosmological constant remains consistent in both inside and outside spacetime. For each panels, it can be found that for higher-order overtones, their migration distances are longer, indicating the enhanced spectrum instability of the overtone modes~\cite{Jaramillo:2020tuu}. Comparing the two panel, we can see the impact of different $a$ values on spectrum migrations. For our parameter setting, the QNM spectra for $a=2$ are more stable than those for $a=1.25$.

\begin{figure}[htbp]
    \centering
   \subfigure[]{\includegraphics[width=0.47\linewidth]{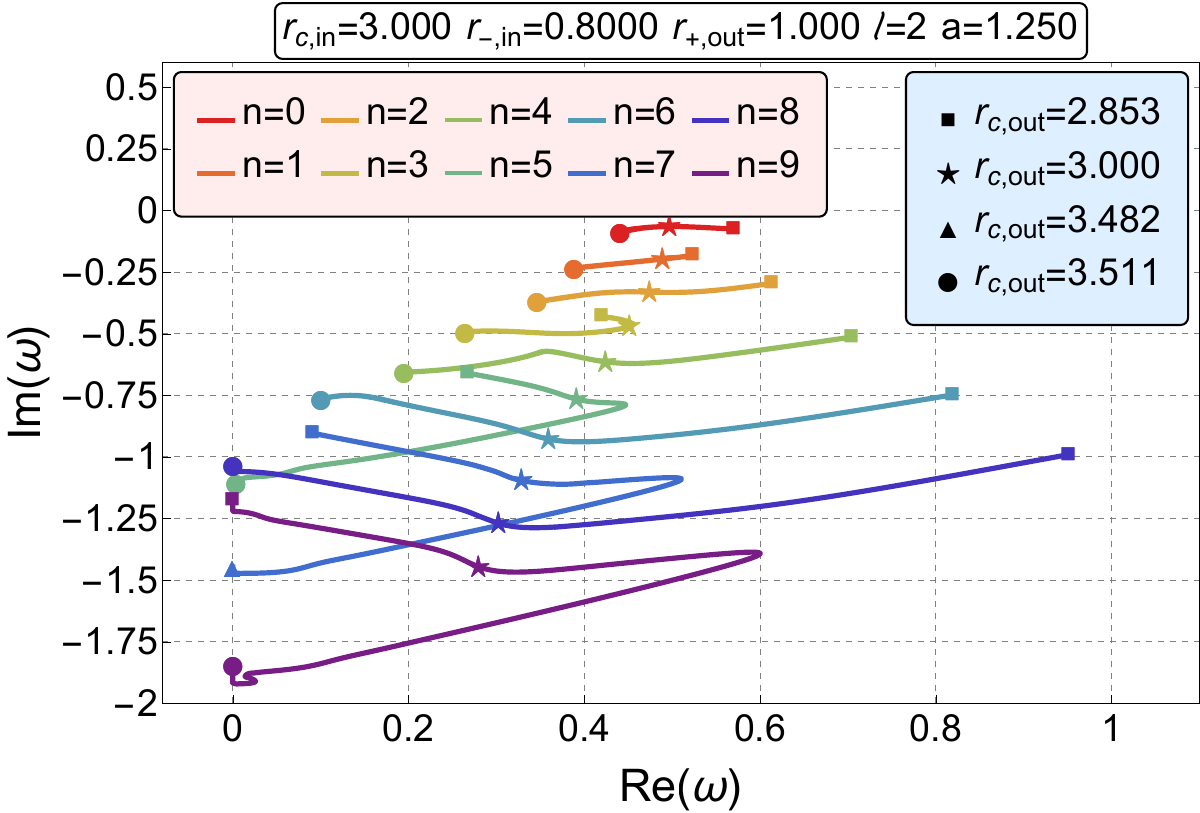}\label{QNM_Spectra_rcout_migration_1}}\hfill
   \subfigure[]{\includegraphics[width=0.47\linewidth]{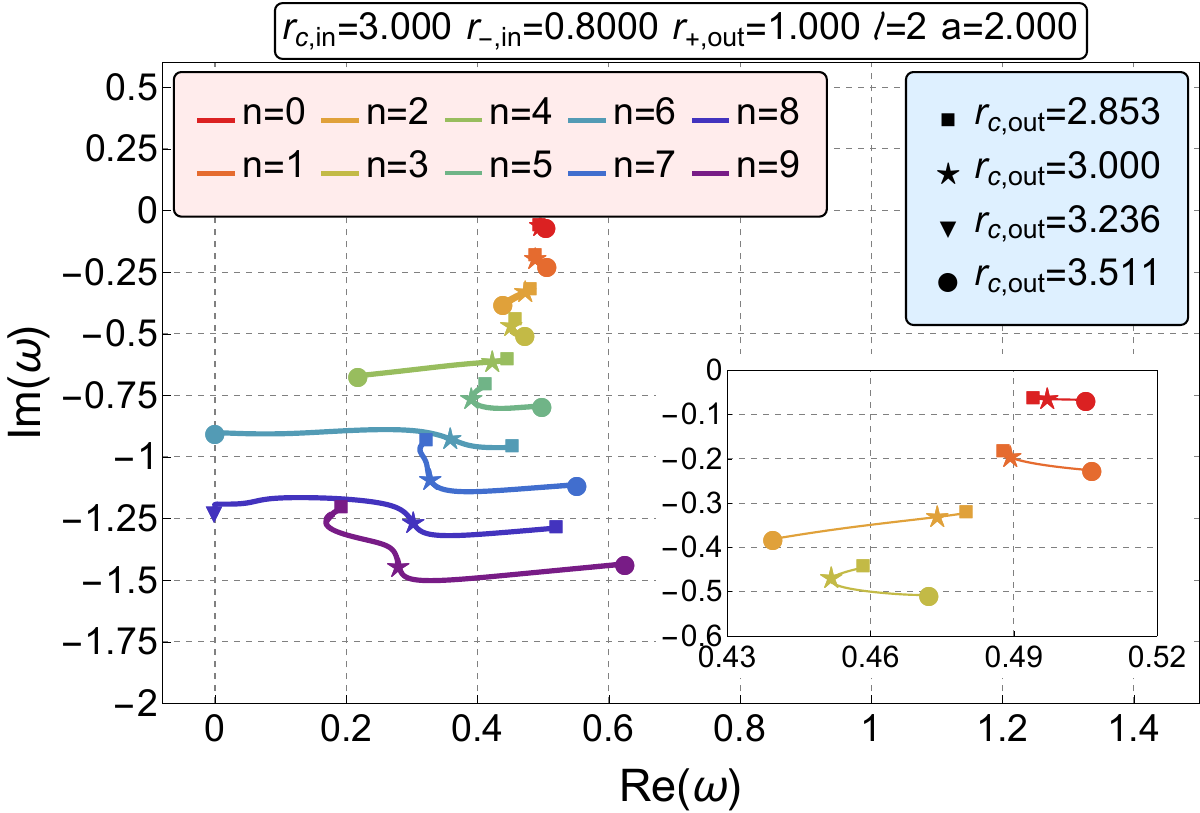}\label{QNM_Spectra_rcout_migration_2}}
    \caption{QNM spectra migration induced by varying the outer cosmological horizon $r_{\text{c,out}}$. The left and right panels correspond to two different shell locations, $a=1.25$ and $a=2.0$, respectively. The fixed parameters are $r_{-,\text{in}}=0.8$, $r_{\text{c,in}}=3$, $r_{+,\text{out}}=1$, and $\ell=2$. Different symbols denote different values of $r_{\text{c,out}}$, while different curves represent the trajectories of QNM overtones during the parameter variation. The QNM spectrum without a thin shell is marked by stars.}
    \label{QNM_Spectra_rcout_migration}
\end{figure}

We can not only study the impact of $r_{\text{c,out}}$ on spectra, but also pay attention to the impact of $a$ on spectra. In Fig. \ref{QNM_a_migration}, the migration of QNM spectra due to the change of $a$ is depicted, where $r_{\text{c,out}}$ is given by $2.853$. The higher overtones are also more unstable than lower overtones. Note that the start QNM spectra with $a=1.25$ come from the results in Fig. \ref{QNM_Spectra_rcout_migration_1}, which are presented by the solid squares. The color data of such start spectra in Fig. \ref{QNM_a_migration} is \textit{BrightBands} in \textit{Mathematica}. Then one uses Eq. (\ref{QNMs_condition_eta_general}) to obtain QNM spectra migrations for each overtones. The end QNM spectra are label by the hollow squares with the color being \textit{BrightBands} too. Consequently, an interesting yet reasonable phenomenon occurred. When the QNM results with $a=2$ and $r_{\text{c,out}}=2.853$ of Fig. \ref{QNM_Spectra_rcout_migration_2} are also placed in Fig. \ref{QNM_a_migration} where the color is given by \textit{Rainbow}, the mode for $n=6$ migrates to that of $n=7$, and the mode for $n=7$ migrates to that of $n=6$. This phenomenon indicates the existence of an exceptional point (EP) in the parameter space $(a, r_{\text{c,out}})$.

\begin{figure}[htbp]
    \centering
   \includegraphics[width=0.80\linewidth]{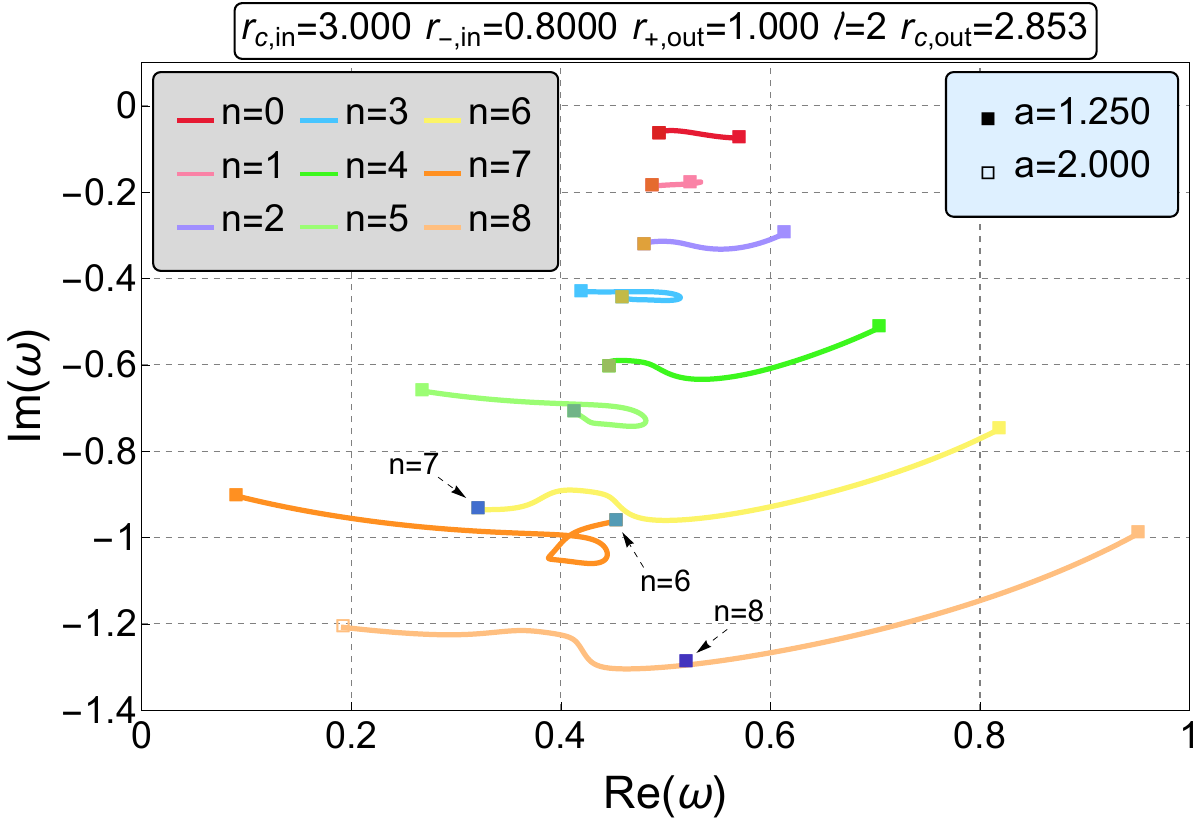}
    \caption{Migration of the QNM spectra caused by varying the shell radius $a$ for fixed $r_{\text{c,out}}=2.853$. The parameters are chosen as $r_{-,\text{in}}=0.8$, $r_{\text{c,in}}=3$, $r_{+,\mathrm{out}}=1$, and $\ell=2$. The solid squares and hollow squares denote spectra corresponding to $a=1.25$ and $a=2.0$, respectively. The trajectories illustrate the continuous evolution of the QNM spectra during the shell radius variation. In particular, the exchange of the $n=6$ and $n=7$ modes between the two parameter choices signals the presence of an EP in the parameter space $(a,r_{\text{c,out}})$.}
    \label{QNM_a_migration}
\end{figure}

From the above statement, it has been demonstrate that there exist an EP in the parameter region constructed by $(a, r_{\text{c,out}})$, and such region is represented by the gray area in Fig. \ref{QNM_Spectra_Exchange_2}. By calculating the spectra along the parameter trajectory $(A\to B\to C\to D\to A)$ in Fig. \ref{QNM_Spectra_Exchange_2}, we observe the exchange of mode $n=7$ and mode $n=6$. In detail, on the frequency complex plane, the migration trajectory of modes are described in Fig. \ref{QNM_Spectra_Exchange_1}, in which one is in red color and the other is in green color. Although the EP exists in the gray region, its exact location cannot be determined at present. This requires us to provide a reliable algorithm to determine the location of EP.

\begin{figure}[htbp]
    \centering
   \subfigure[]{\includegraphics[width=0.47\linewidth]{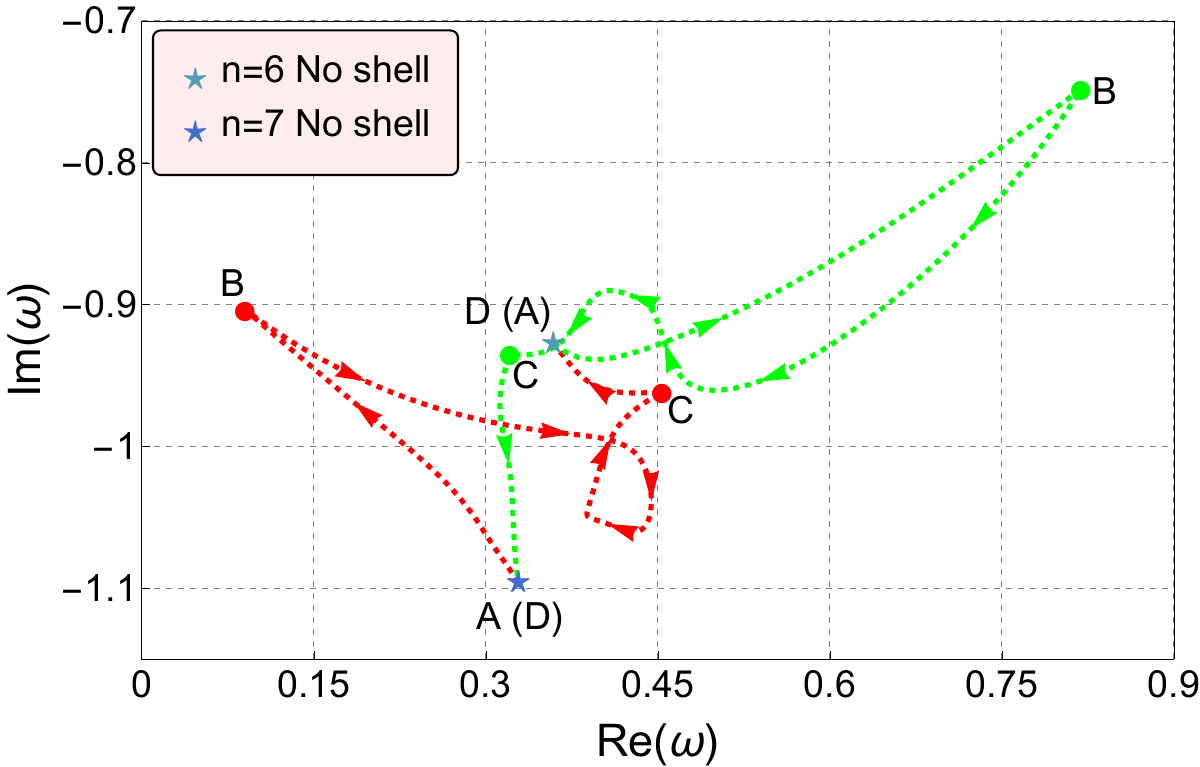}\label{QNM_Spectra_Exchange_1}}\hfill
   \subfigure[]{\includegraphics[width=0.47\linewidth]{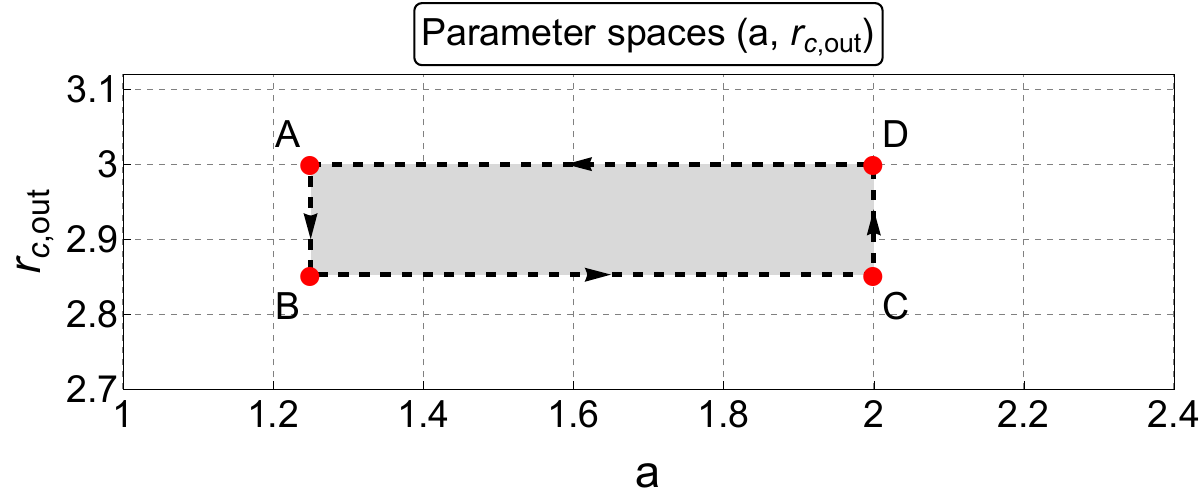}\label{QNM_Spectra_Exchange_2}}
    \caption{Identification of an EP in the parameter space $(a,r_{\text{c,out}})$. (a) Trajectories of the two QNM branches ($n=6$ and $n=7$) in the complex-frequency plane along a closed loop $A\rightarrow B\rightarrow C\rightarrow D\rightarrow A$. The two modes exchange their positions after one encircling cycle, demonstrating the mode permutation associated with an exceptional point. (b) The corresponding closed trajectory in the parameter space $(a,r_{\text{c,out}})$.}
    \label{QNM_Spectra_Exchange}
\end{figure}

Now, we give a practical approach to find the EP. First, the method starts from a rectangular region in a two-dimensional parameter space. The QNM spectra are calculated along the boundary of this rectangle. Then the spectra trajectories are tracked. If the spectra return to their original positions, the rectangle is considered to contain no EP. If the spectra are exchanged, the rectangle is considered to contain one EP at least. The above fact has been confirmed for our present two-dimensional parameter space $(a, r_{\text{c,out}})$, see Fig. \ref{QNM_Spectra_Exchange_1}. Once an EP is detected inside a rectangle, the rectangle is subdivided into two smaller rectangles. The subdivision is performed alternately: vertical subdivision into left and right regions, and horizontal subdivision into upper and lower regions. One repeats subdivision until the desired number of iterations. Fig. \ref{EP_Finding} shows such subdivision. To be more detailed, parameters start from the upper left corner of the rectangle denoted by black star, rotate counterclockwise, and one tracks the trajectories of two QNM spectra. 

For the case in Fig. \ref{EP_Finding_LR}, we additionally evaluate the QNM spectra along a trajectory extending from the midpoint of the lower edge to the midpoint of the upper edge, as indicated by the red arrow in Fig. \ref{EP_Finding_LR}. If the outcome at the upper edge midpoint coincides with that obtained on the original rectangular contour at the same point, the EP resides on the left side of the rectangle. The left half of the rectangle then constitutes the new rectangle, and the QNM results on the right side of this new rectangle are supplanted by the result of the additional trajectory. Conversely, should the outcome at the upper edge midpoint prove discordant with that of the original contour, the EP is situated on the right side. The right half becomes the new rectangle, and the QNM results on its left side are superseded by the inverse results of the additional trajectory (red dashed arrow in Fig. \ref{EP_Finding_LR}). For the scenario depicted in Fig. \ref{EP_Finding_TB}, we also evaluate the QNM spectra along a trajectory extending from the midpoint of the right edge to the midpoint of the left edge, as indicated by the red arrow in Fig. \ref{EP_Finding_TB}. If the result at the midpoint of the left edge coincides with that obtained on the original rectangular contour at the same point, the EP is situated on the upper side of the rectangle. The upper half of the rectangle then constitutes the new rectangle, and the QNM result on the lower side of this new rectangle is supplanted by the inverse result of the additional trajectory (red dashed arrow in Fig. \ref{EP_Finding_TB}). Conversely, should the result at the midpoint of the left edge prove discordant with that of the original contour, the exceptional point resides on the lower side. The lower half becomes the new rectangle, and the QNM result on its upper side is superseded by the result of the additional trajectory. It is worth mentioning that the number of horizontal and vertical divisions does not necessarily have to be the same. For more detail, one can refer to~\cite{deguchi2024}, in which the study object is an elastic system.

\begin{figure}[htbp]
    \centering
   \subfigure[]{\includegraphics[width=0.47\linewidth]{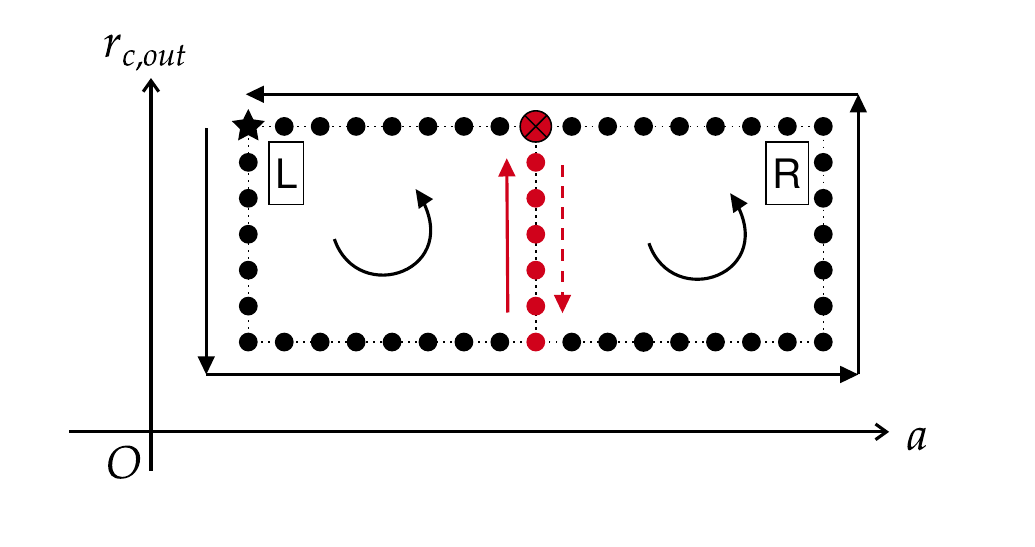}\label{EP_Finding_LR}}\hfill
   \subfigure[]{\includegraphics[width=0.47\linewidth]{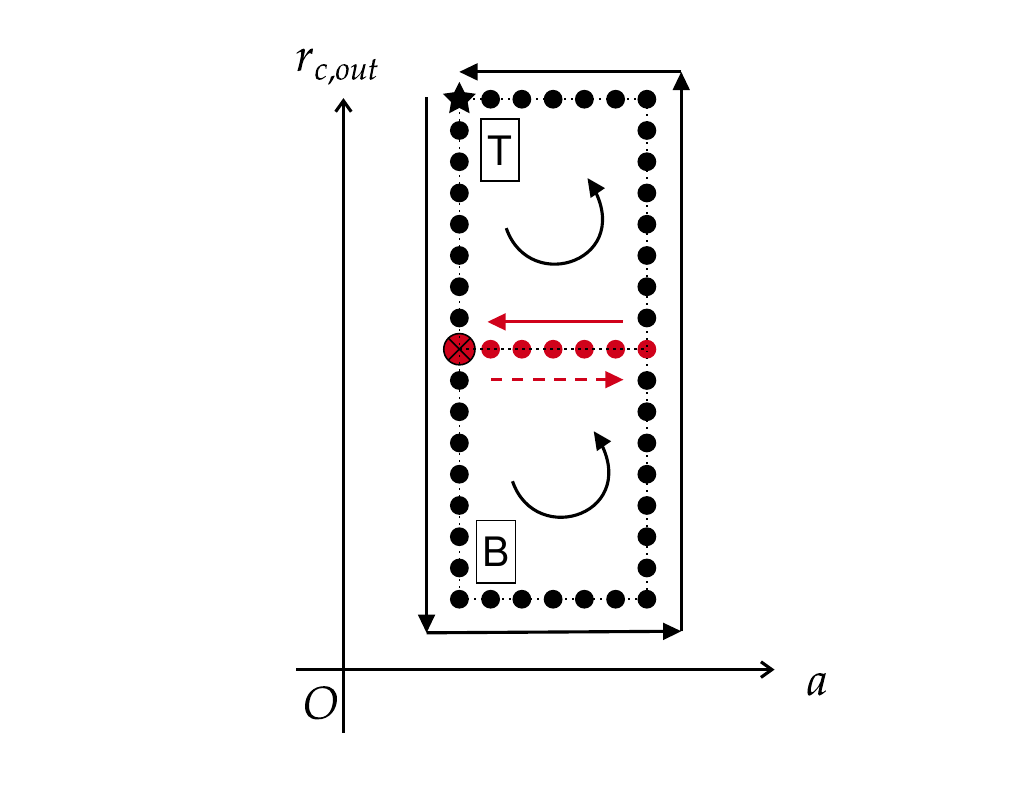}\label{EP_Finding_TB}}
    \caption{The schematic diagram for finding exceptional points. The left panel stands for the left and right subdivision, while the right panel stands for the top and bottom subdivision. Continuously perform these two operations to find EPs, namely compressing space for EP. The number of these two operations may not be the same.}
    \label{EP_Finding}
\end{figure}

When the small rectangular contours obtained by subdivision are sufficiently small, we can directly use the spectra results on the rectangle to obtain the positions of the EP. The method called octagon method (nine points) is introduced as follows~\cite{Feldmaier:2016ukr}. We denote that two selected eigenvalues $\omega_{+}$ and $\omega_{-}$, and the two parameter space is set by $(\mu,\nu)$. At an EP, it is known that the eigenvalue is not analytical but has a Puiseux series structure for some parameter~\cite{kato1976perturbation}. In order to circumvent the branch point singularity at EP, two auxiliary functions are introduced~\cite{Feldmaier:2016ukr,Nennig_2020,Uzdin_2010,PhysRevA.79.053408}:
\begin{eqnarray}\label{sum_omega}
    g(\mu,\nu)\equiv\omega_{+}+\omega_{-}=A+B(\mu-\mu_0)+C(\nu-\nu_0)+o(\mu-\mu_0)+o(\nu-\nu_0)\, ,
\end{eqnarray}
and
\begin{eqnarray}\label{squared_difference_omega}
    h(\mu,\nu)\equiv(\omega_{+}-\omega_{-})^2&=&D+E(\mu-\mu_0)+F(\nu-\nu_0)+G(\mu-\mu_0)^2+H(\mu-\mu_0)(\nu-\nu_0)+I(\nu-\nu_0)^2\nonumber\\
    &&+o\big((\mu-\mu_0)^2\big)+o\big((\nu-\nu_0)^2\big)\, .
\end{eqnarray}
By construction, these two functions are analytical. So their power series should exist and are shown explicitly in Eq. (\ref{sum_omega}) and Eq. (\ref{squared_difference_omega}). In Eq. (\ref{sum_omega}) and Eq. (\ref{squared_difference_omega}), complex coefficients $A$ to $I$ are expansion coefficients and the expansion point is $(\mu_0,\nu_0)$. It is worth mentioning that in~\cite{Uzdin_2010}, only linear terms are remained in Eq. (\ref{squared_difference_omega}) as they use the three-point method. Here, we expand the function $h(\mu,\nu)$ into second-order, and this will result in rougher initial parameters being able to find precise exceptional points.

For sake of getting simple relations for the coefficients, one use such nine points in the parameter space $(\mu,\nu)$. Specifically, among these points, four at the corner of a rectangle, four are midpoints of the rectangular edges, and one at the center, where schematic diagram is shown in Fig. \ref{Nine_Points_Method}. For $i=0,\cdots,8$, the sums $g_i=\omega_{+,i}+\omega_{-,i}$ and the squared differences $h_i=(\omega_{+,i}-\omega_{-,i})^2$ can be used to determine nine coefficients in Eq. (\ref{sum_omega}) and Eq. (\ref{squared_difference_omega}). Neglecting higher-order terms, it is not difficult to calculate that
\begin{eqnarray}\label{nine_coefficients}
    &&A=g_0\, ,\quad B=\frac{g_1-g_5}{2(\Delta\mu)}\, ,\quad C=\frac{g_3-g_7}{2(\Delta\nu)}\, ,\nonumber\\
    &&D=h_0\, ,\quad E=\frac{h_1-h_5}{2(\Delta\mu)}\, ,\quad F=\frac{h_3-h_7}{2(\Delta\nu)}\, ,\nonumber\\
    &&G=\frac{h_1+h_5-2h_0}{2(\Delta\mu)^2}\, ,\quad H=\frac{h_2-h_4+h_6-h_8}{4(\Delta\mu)(\Delta\nu)}\, ,\quad I=\frac{h_3+h_7-2h_0}{2(\Delta\nu)^2}\, .
\end{eqnarray}

\begin{figure}[htbp]
    \centering
   \includegraphics[width=0.75\linewidth]{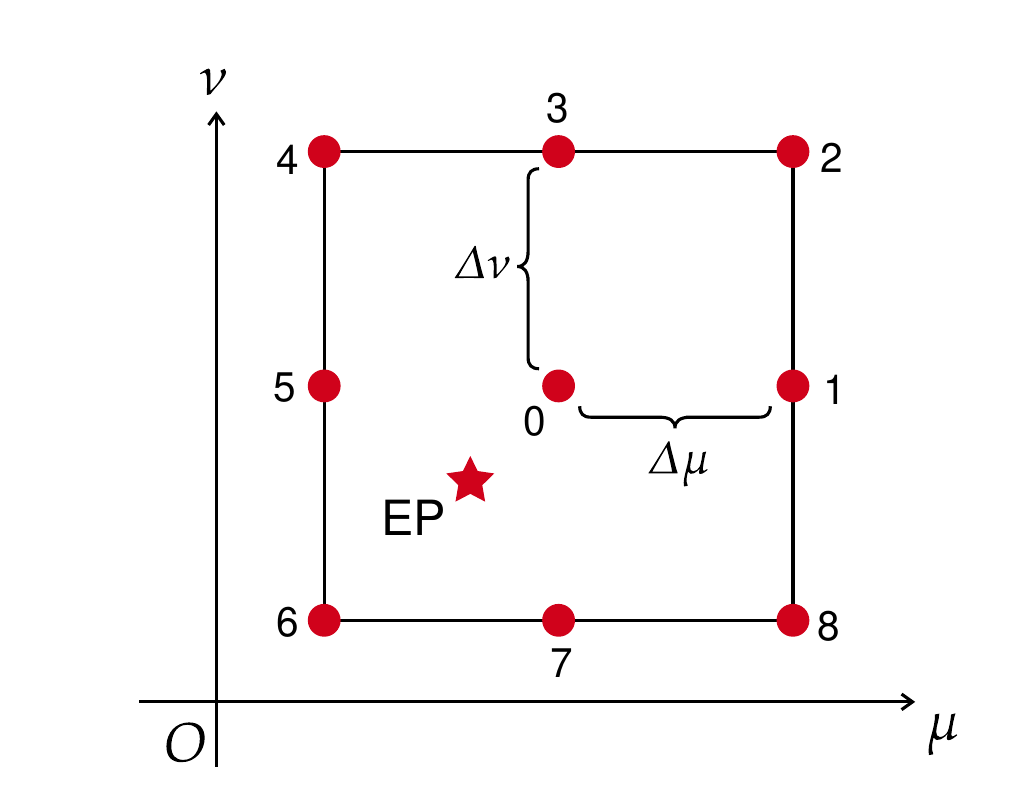}
    \caption{The schematic diagram of the nine point method. One should calculate $\omega_{+}$ and $\omega_{-}$ on these nine points. EP is represented by the red star, and such EP does not necessarily have to be inside the rectangle.}
    \label{Nine_Points_Method}
\end{figure}

Having obtained all coefficients, one can make an estimation for the position $(\mu_{\text{EP}},\nu_{\text{EP}})$ by solving $h(\mu_{\text{EP}},\nu_{\text{EP}})=0$ which is the condition for EP. Define offsets in the $\mu$ and $\nu$ directions as $\delta\mu\equiv\mu_{\text{EP}}-\mu_0$ and $\delta\nu\equiv\nu_{\text{EP}}-\nu_0$, one derives the following algebraic equation
\begin{eqnarray}\label{algebraic_equation}
    0=D+E\delta\mu+F(\delta\nu)+G(\delta\mu)^2+H(\delta\mu)(\delta\nu)+I(\delta\nu)^2\, .
\end{eqnarray}
Solving the real part and imaginary part of above equation, we will get $\delta\mu\in\mathbb{R}$ and $\delta\nu\in\mathbb{R}$ (See Appendix B in~\cite{Feldmaier:2016ukr}). Accordingly, EP will be obtained with $\mu_{\text{EP}}=\mu_0+\delta\mu$ and $\nu_{\text{EP}}=\nu_0+\delta\nu$. Here, for our present model, $\mu=a$ and $\nu=r_{\text{c,out}}$. In order to evaluate nine quantities in Eqs. (\ref{nine_coefficients}), the numerical settings used for $\Delta a$ and $\Delta r_{\text{c,out}}$ are given by $0.000732421875$ and $0.00057421875$, respectively. Finally, one gets $(\mu_{\text{EP}},\nu_{\text{EP}})\simeq(1.969983,2.910025)$ and $\omega_{\text{EP}}\simeq0.3715095-0.9712053\mathrm{i}$.

\section{Exceptional lines and directional spectral sensitivity}\label{sec: ELs}
In this section, we will discuss the exceptional line (EL) and directional spectral sensitivity. First, we introduce definition of the EL. Let the system depend smoothly on a real three-dimensional parameter vector $\mathbf{p}=(p_1,p_2,p_3)^\mathrm{T}=(\mu,\nu,\xi)^\mathrm{T}\in\mathbb{R}^3$. Suppose that two eigenvalues, denoted by $\omega_{+}(\mathbf{p})$ and $\omega_{-}(\mathbf{p})$, coalesce at an exceptional point $\mathbf{p}_0$. We introduce the squared eigenvalue difference which is also called the discriminant function
\begin{eqnarray}\label{eq:def_h}
    h(\mathbf{p}):=\Big[\omega_{+}(\mathbf{p})-\omega_{-}(\mathbf{p})\Big]^2\, .
\end{eqnarray}
The function $h$ is generally complex-valued and may be decomposed as
\begin{eqnarray}\label{eq:h_F1_F2}
    h(\mathbf{p})=F_1(\mathbf{p})+\mathrm{i} F_2(\mathbf{p})\, ,\qquad \text{with}\qquad F_1(\mathbf{p})=\operatorname{Re}h(\mathbf{p})\, ,\qquad F_2(\mathbf{p})=\operatorname{Im}h(\mathbf{p})\, .
\end{eqnarray}
Therefore, in a three-dimensional parameter space, the exceptional line is locally given by the intersection
\begin{eqnarray}\label{eq:EP_line}
    \mathcal{L}_{\text{EP}}=\Big\{
        \mathbf{p}\in\mathbb{R}^{3}\,\Big|\,F_1(\mathbf{p})=0,\;F_2(\mathbf{p})=0
    \Big\}\, .
\end{eqnarray}
From the above equation, we can understand the EL as the intersection of two surfaces. Note that unless otherwise specified, $\mu$ refers to $a$, $\nu$ refers to $r_{\text{c,out}}$, and $\xi$ refers to $r_{+,\text{out}}$. This EL is also written by $\mathbf p_{\rm EL}=\mathbf p_{\rm EL}(s)$, where $s$ is the arc length along the EL. 

In Sec. \ref{sec: EPs}, given $r_{+,\text{out}}=1$, we have found an EP at the parameter space $(a,r_{\text{c,out}})$. In order to achieve the EL passing through such EP, we take the results for $\omega_{+}$ and $\omega_{-}$ at $r_{+,\text{out}}=1$ as the benchmark, and then slowly vary the value of $r_{+,\text{out}}$. For each new $r_{+,\text{out}}$, the corresponding $\omega_{+}$ and $\omega_{-}$ are obtained from the previous $\omega_{+}$ and $\omega_{-}$ using the \textit{FindRoot} function. Thereafter, using Eq. (\ref{sum_omega}), Eq. (\ref{squared_difference_omega}) (see also Eq. (\ref{h_i_mu_nu})) and Eqs. (\ref{nine_coefficients}), six function sequences in terms of $\xi_i$, i.e., $D(\xi_i)$, $E(\xi_i)$, $F(\xi_i)$, $G(\xi_i)$, $H(\xi_i)$, $I(\xi_i)$ can be derived, in which the expanding center is the result of the previous EP and $\Delta\xi$ is fixed. Following the above method, EL will be derived. 

Fig. \ref{EL_Plot_and_omega_EL_rp_out} shows the numerically traced exceptional line with $\Delta\xi=0.0005$ and the corresponding exceptional frequency. As shown in Fig. \ref{EL_Plot}, starting from the point labeled ``Start'', the EP can be continuously traced as the system parameters are varied, eventually reaching the point labeled ``End''. Fig. \ref{omega_EL_rp_out_Plot} shows the evolution of the corresponding degenerate frequency $\omega_{\text{EP}}$ along the exceptional line, where $r_{+,\text{out}}$ is used as the parameter characterizing the position along the line. Both the real and imaginary parts of $\omega_{\text{EP}}$ vary smoothly along the EL. In particular, as $r_{+,\text{out}}$ increases from $1$ to $1.76$, the real part of the exceptional frequency increases monotonically from approximately $\operatorname{Re}(\omega_{\text{EP}})\simeq 0.37$ to $\operatorname{Re}(\omega_{\text{EP}})\simeq 0.55$. At the same time, the imaginary part increases from approximately $\operatorname{Im}(\omega_{\text{EP}})\simeq -0.97$ to $\operatorname{Im}(\omega_{\text{EP}})\simeq -0.89$, and therefore becomes progressively less negative. These numerical results therefore demonstrate that the EP is not an isolated object in the three-dimensional parameter space. Instead, it extends into a continuous EL, along which both the location of the degeneracy in parameter space and the associated degenerate frequency $\omega_{\text{EP}}$ evolve continuously.

\begin{figure}[htbp]
    \centering
    \subfigure[]
   {\includegraphics[width=0.47\linewidth]{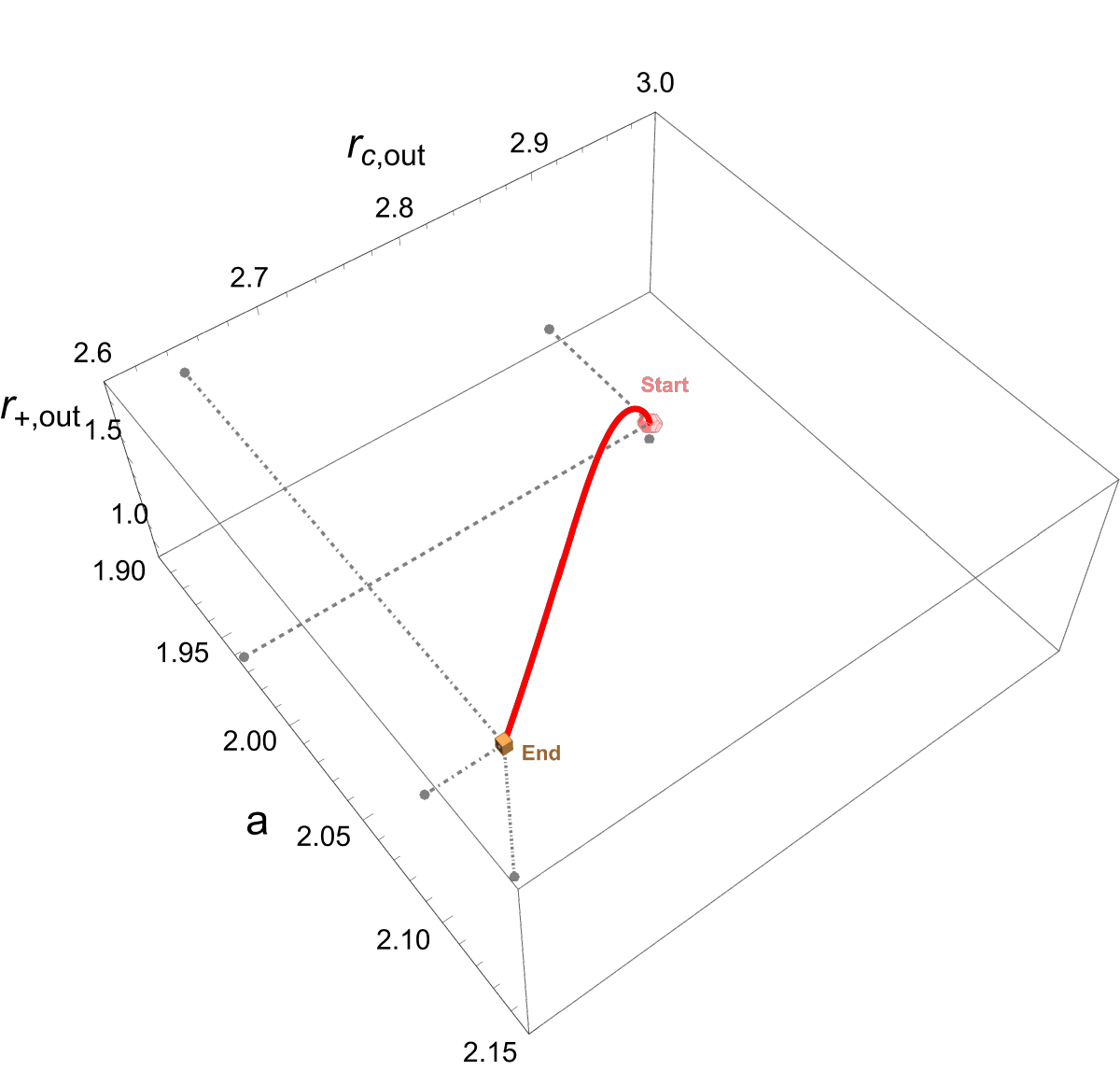}\label{EL_Plot}}\hfill
    \subfigure[]
    {\includegraphics[width=0.47\linewidth]{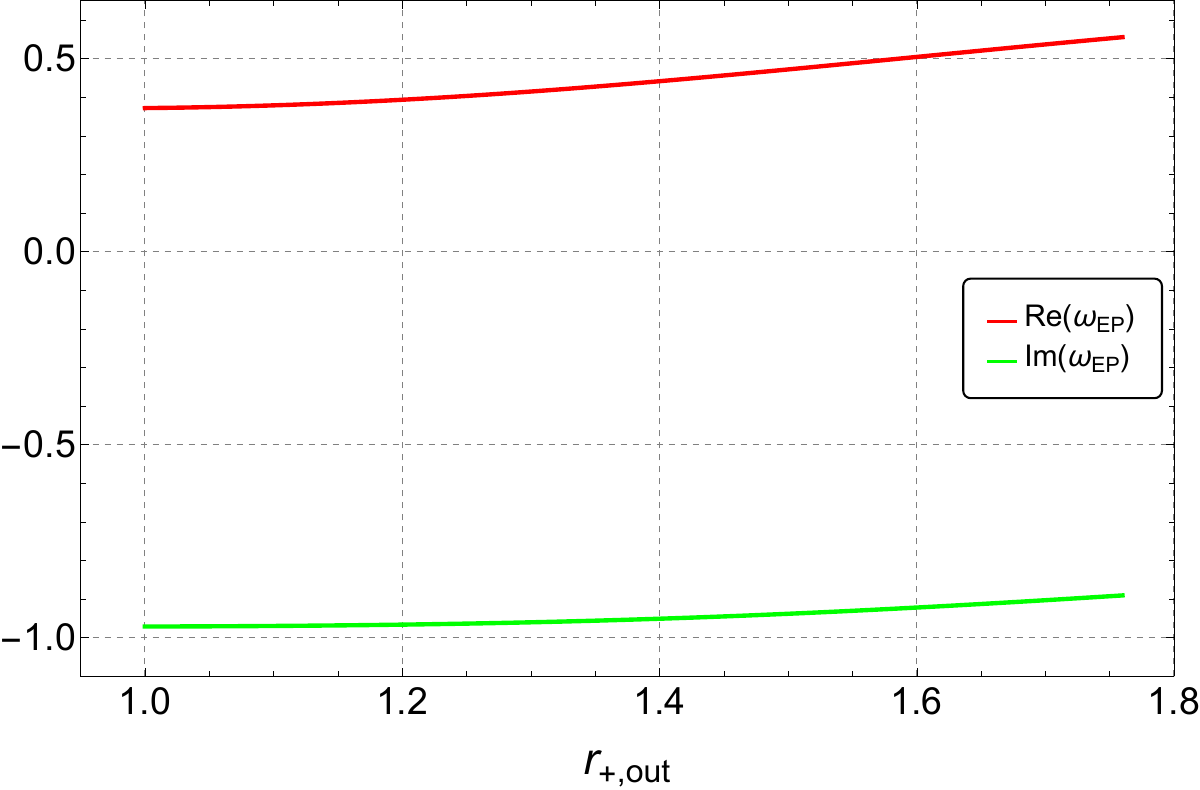}\label{omega_EL_rp_out_Plot}}
    \caption{Numerically traced exceptional line and the corresponding exceptional frequency for the modes $n=6$ and $n=7$ with $r_{-,\text{in}}=0.8$, $r_{\text{c,in}}=3$ and $\ell=2$. (a) Exceptional line in the three-dimensional parameter space $(a,r_{\text{c,out}},r_{+,\text{out}})$. The red curve represents the continuous locus of EPs, with the starting and ending points indicated explicitly. (b) Real and imaginary parts of the degenerate frequency $\omega_{\text{EP}}$ along the EL as functions of $r_{+,\text{out}}$ with $r_{+,\text{out}}\in[1,1.76]$.}
    \label{EL_Plot_and_omega_EL_rp_out}
\end{figure}

In~\cite{Cao:2025afs}, we have demonstrate that directional spectral sensitivity of the EL. In this study, we explain the reason theoretically. The directional spectral sensitivity for the EL is proposed in the followings. Assume that a point $\mathbf{p}_0$ is on the EL. Consider a small perturbation away from such exceptional point $\mathbf{p}_0$,
\begin{eqnarray}\label{eq:directional_perturbation}
    \mathbf{p}(\epsilon)=\mathbf{p}_0+\epsilon\widehat{\mathbf{u}}\, ,\qquad 0<\epsilon\ll1\, ,\qquad\big\|\widehat{\mathbf{u}}\big\|=1\, ,
\end{eqnarray}
where $\widehat{\mathbf{u}}$ is a prescribed unit direction in parameter space. Expanding $h$ around $\mathbf{p}_0$ gives
\begin{eqnarray}\label{eq:h_directional_expansion}
    h(\mathbf{p}_0+\epsilon\widehat{\mathbf{u}})=\epsilon\nabla h(\mathbf{p}_0)\cdot\widehat{\mathbf{u}}+\frac{\epsilon^2}{2}\widehat{\mathbf{u}}^{\mathrm{T}}\mathbf{H}_h(\mathbf{p}_0)\widehat{\mathbf{u}}+\mathcal{O}(\epsilon^3)\, ,
\end{eqnarray}
where the EP condition $h(\mathbf{p}_0)=0$ has been used, and $\mathbf{H}_h$ is the Hessian matrix of $h$. From Eq. (\ref{eq:h_F1_F2}), we have
\begin{eqnarray}\label{eq:mod_directional_h}
    \big|\nabla h(\mathbf{p}_0)\cdot\widehat{\mathbf{u}}\big|=\Big[\Big(\nabla F_1(\mathbf{p}_0)\cdot\widehat{\mathbf{u}}\Big)^2+\Big(\nabla F_2(\mathbf{p}_0)\cdot\widehat{\mathbf{u}}\Big)^2
    \Big]^{1/2}\, .
\end{eqnarray}
Substituting Eq. \eqref{eq:h_directional_expansion} into Eq. \eqref{eq:def_h}, we obtain, provided that $\nabla h(\mathbf{p}_0)\cdot\widehat{\mathbf{u}}\neq 0$,
\begin{eqnarray}\label{eq:square_root_splitting}
    \big|\omega_{+}(\mathbf{p}_0+\epsilon\widehat{\mathbf{u}})-\omega_{-}(\mathbf{p}_0+\epsilon\widehat{\mathbf{u}})
    \big|=
    \big|\epsilon
        \nabla h(\mathbf{p}_0)\cdot\widehat{\mathbf{u}}+\mathcal{O}(\epsilon^2)
    \big|^{1/2}=
    \sqrt{\epsilon}\big|\nabla h(\mathbf{p}_0)\cdot\widehat{\mathbf{u}}\big|^{1/2}+o(\sqrt{\epsilon})\, .
\end{eqnarray}
We therefore define the directional spectral sensitivity coefficient as
\begin{eqnarray}\label{eq:def_directional_sensitivity}
    C_{\widehat{\mathbf{u}}}:=\lim_{\epsilon\rightarrow 0^{+}}\frac{
        \big|\omega_{+}(\mathbf{p}_0+\epsilon\widehat{\mathbf{u}})-\omega_{-}(\mathbf{p}_0+\epsilon\widehat{\mathbf{u}})
        \big|
    }{\sqrt{\epsilon}}=\big|\nabla h(\mathbf{p}_0)\cdot\widehat{\mathbf{u}}\big|^{1/2}\, .
\end{eqnarray}
Define the real symmetric matrix which we call it the spectral sensitivity anisotropy matrix
\begin{eqnarray}\label{eq:def_K}
    \mathbf{K}:=\nabla F_1\nabla F_1^{\mathrm{T}}+\nabla F_2\nabla F_2^{\mathrm{T}}\, .
\end{eqnarray}
Note that the above notation is the dyadic notation. Therefore, from Eq. (\ref{eq:mod_directional_h}) and Eq. (\ref{eq:def_directional_sensitivity}), the directional spectral sensitivity can be written compactly as
\begin{eqnarray}\label{eq:C_matrix_form}
    C_{\widehat{\mathbf{u}}}=
    \Big(\widehat{\mathbf{u}}^{\mathrm{T}}
        \mathbf{K}\widehat{\mathbf{u}}\Big)^{1/4}\, .
\end{eqnarray}
Thus, the spectrum splitting near a second-order exceptional point takes the universal square-root form
\begin{eqnarray}\label{eq:universal_square_root}
    |\omega_{+}-\omega_{-}|=C_{\widehat{\mathbf{u}}}\sqrt{\epsilon}+o(\sqrt{\epsilon})\, .
\end{eqnarray}
However, for $\widehat{\mathbf{u}}=\mathbf{T}$, where $\mathbf{T}$ is unit tangent vector to the exceptional line, Eq. \eqref{eq:C_matrix_form} gives
\begin{eqnarray}\label{eq:tangent_sensitivity_zero}
        C_{\mathbf{T}}=0\, ,\qquad\text{with}\qquad \mathbf{T}=\frac{\nabla F_1\times\nabla F_2}{\big\|\nabla F_1\times\nabla F_2
        \big\|}\, .
\end{eqnarray} 
This means that matrix $\mathbf{K}$ has an eigenvalue of $0$, denoted as $\lambda_0$. More importantly, we can see that the first-order square-root splitting vanishes along the EL tangent. Therefore, in this case, we obtain $|\omega_{+}-\omega_{-}|=\mathcal{O}(\epsilon)$ rather than $\mathcal{O}(\sqrt{\epsilon})$. To demonstrate the reliability of our calculation results, we examine the results of two methods for calculating the tangent of the EL. The first method directly uses the result of the exceptional line to find the tangent, while the second method uses the eigenvectors corresponding to the zero eigenvalue of matrix $\mathbf{K}$, and finds that their results are highly consistent.

Different $\widehat{\mathbf{u}}$ determines different $C_{\widehat{\mathbf{u}}}$. It can be proved that the maximum sensitivity direction lies in the normal plane of the exceptional line. For more details about it, one can refer to Appendix \ref{thm: maximum_sensitivity_direction}. Furthermore, the maximum sensitivity over all unit directions is
\begin{eqnarray}\label{eq:C_max}
        C_{\max}=\lambda_{+}^{1/4}\, ,
\end{eqnarray}
where $\lambda_{+}$ is the maximum eigenvalue of matrix $\mathbf{K}$, and $0=\lambda_0<\lambda_{-}\leq\lambda_{+}$ are three eigenvalues of $\mathbf{K}$. The eigenvector corresponding to $\lambda_0$ is the tangent direction, whereas the eigenvectors associated with $\lambda_{-}$ and $\lambda_{+}$ lie in the normal plane. For the definition of directional sensitivity (\ref{eq:C_matrix_form}), it should only be compared within the normal plane of the exceptional line because one has $C_{\mathbf{T}}=0$. If the perturbation is restricted to the normal plane, the minimum and maximum sensitivities are instead
\begin{eqnarray}\label{eq:C_min_max_normal}
        C_{\min}^{(\perp)}=\lambda_{-}^{1/4}\, ,
        \qquad C_{\max}^{(\perp)}=\lambda_{+}^{1/4}\, .
\end{eqnarray}
The corresponding eigenvectors determine the least-sensitive and most-sensitive transverse perturbation directions, respectively. The expressions of $\lambda_{+}$ and $\lambda_{-}$ can be found in Appendix \ref{app: two_eigenvalues}. From Eqs. (\ref{eq:C_min_max_normal}), one can define the anisotropy of spectral sensitivity as follow
\begin{eqnarray}\label{anisotropy_of_spectral_sensitivity}
    \mathcal{A}:=\frac{C_{\max}^{(\perp)}}{C_{\min}^{(\perp)}}\, .
\end{eqnarray}
The anisotropy of spectral sensitivity $\mathcal{A}$ quantifies the directional dependence of the transverse spectral splitting near an EL.

\begin{figure}[htbp]
    \centering
    \subfigure[]
   {\includegraphics[width=0.47\linewidth]{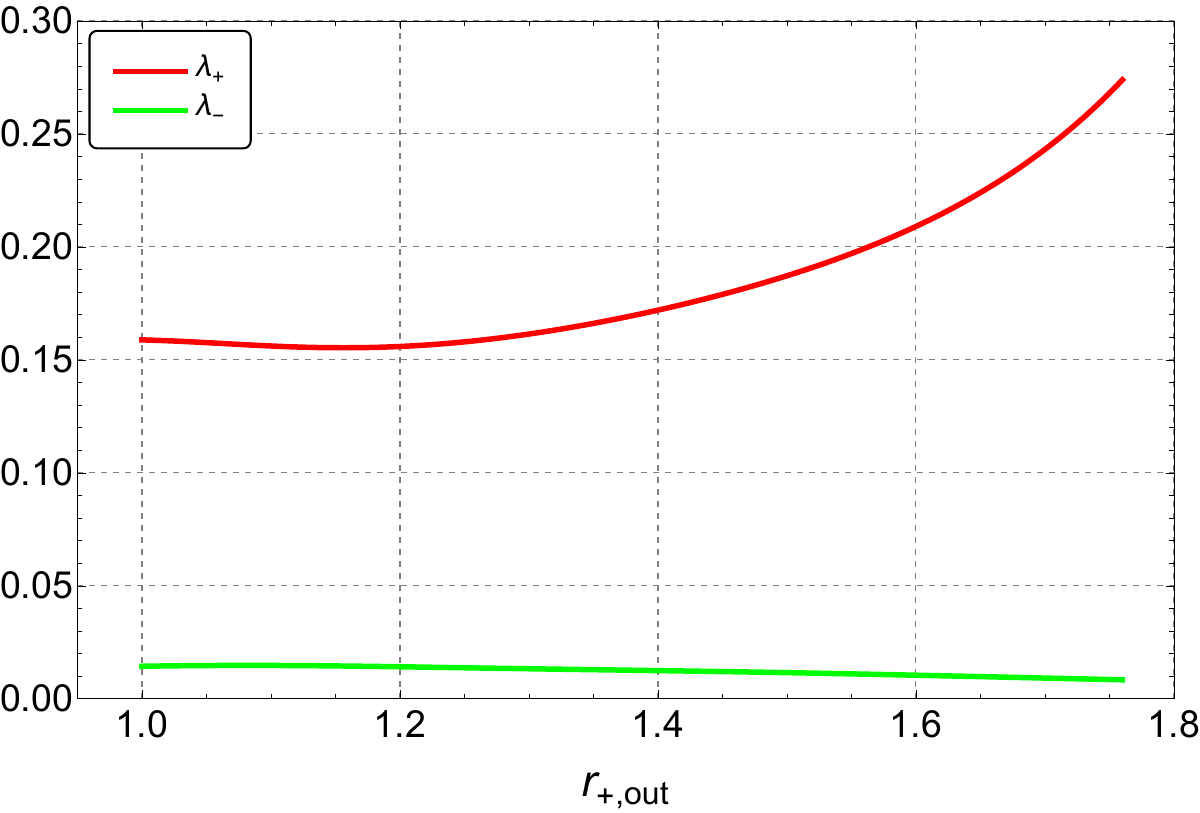}\label{lambda_p_lambda_m_Plot}}\hfill
    \subfigure[]
    {\includegraphics[width=0.47\linewidth]{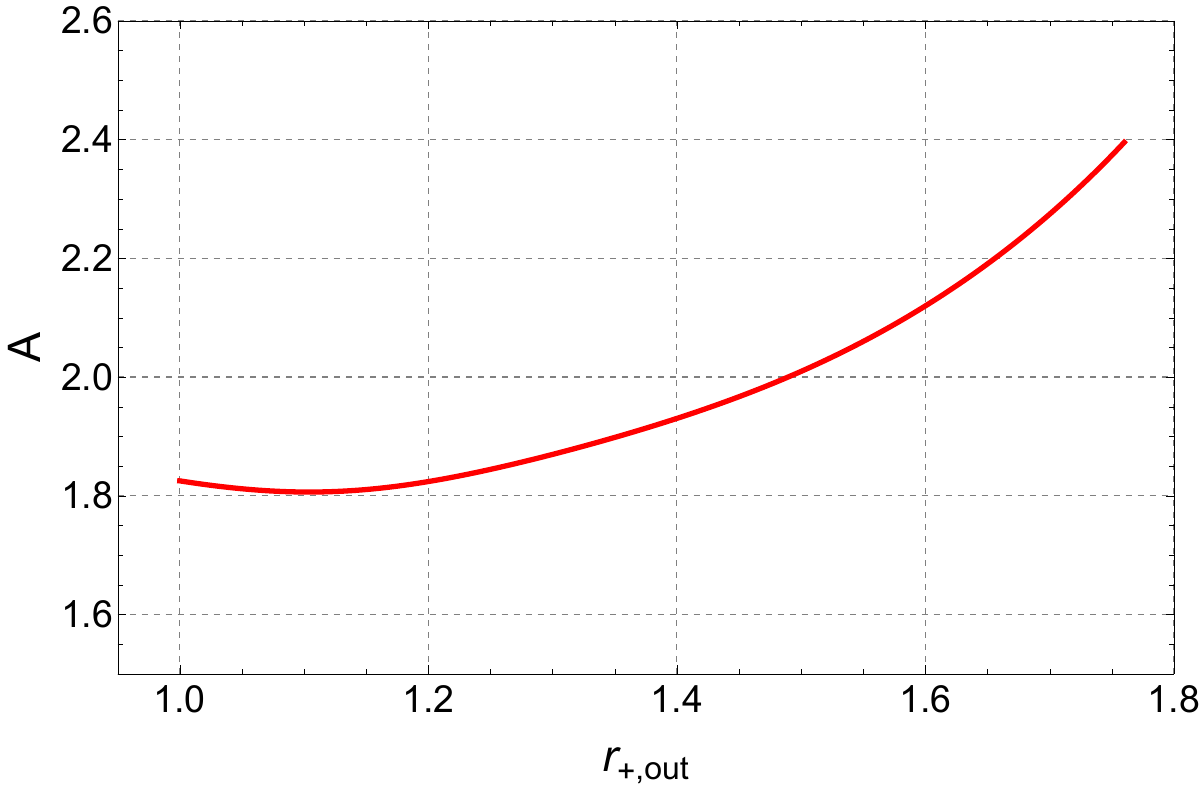}\label{A_Plot}}
    \caption{(a) Two eigenvalues $\lambda_{+}$ and $\lambda_{-}$ of the matrix $\mathbf{K}$ versus $r_{+,\text{out}}$, with $r_{+,\text{out}}\in[1,1.76]$. (b) The anisotropy of spectral sensitivity $\mathcal{A}=(\lambda_{+}/\lambda_{-})^{1/4}$. The results show that the spectral response remains anisotropic along the EL, with the directional sensitivity increasing gradually as $r_{+,\text{out}}$ increases.}
    \label{lambda_p_lambda_m_and_A}
\end{figure}

When $\mathcal{A}=1$, the spectral response is isotropic in the normal plane, meaning that perturbations in all transverse directions produce the same leading-order spectrum splitting. When $\mathcal{A}>1$, the spectral response is anisotropic, and the magnitude of the eigenfrequency splitting depends on the transverse perturbation direction. A large value, $\mathcal{A}\gg1$, indicates strong anisotropy. This means that the system is significantly more sensitive to perturbations along one principal normal direction than along the other. In the limiting case $\mathcal{A}\to\infty$, the leading-order square-root response vanishes along one principal direction in the normal plane, while remaining finite along the other. This behavior signals a loss of regularity of the local EL structure and may indicate the approach to a higher-order degeneracy, a branch reconstruction, or another singular geometric configuration. 

Fig. \ref{lambda_p_lambda_m_and_A} shows the evolution of the two eigenvalues, $\lambda_{+}$ and $\lambda_{-}$, together with the corresponding anisotropy measure $\mathcal{A}$ along the EL. Throughout the parameter range considered here, the two eigenvalues remain positive and satisfy $\lambda_{+} \gg \lambda_{-}$ indicating a pronounced directional dependence of the spectral sensitivity in the normal plane of the EL. The larger eigenvalue $\lambda_{+}$ first decreases slightly and reaches a shallow minimum around $r_{+,\text{out}}\simeq 1.15$, after which it increases continuously and becomes increasingly steep for larger $r_{+,\text{out}}$. In contrast, $\lambda_{-}$ varies much more slowly and exhibits an overall decreasing trend. Consequently, the separation between the two sensitivities becomes progressively larger as $r_{+,\mathrm{out}}$ increases. Fig. \ref{A_Plot} displays the anisotropy of spectral sensitivity (\ref{anisotropy_of_spectral_sensitivity}) which is the ratio between the maximum and minimum spectral sensitivity coefficients in the normal plane. One finds that $\mathcal{A}$ has a shallow minimum of about $1.8$ around $r_{+,\mathrm{out}}\simeq 1.1$, and then increases monotonically to about $2.4$ as $r_{+,\text{out}}$ becomes larger. This indicates that the spectral response is anisotropic over the EL, and that the directional contrast of the spectral sensitivity becomes progressively stronger toward larger $r_{+,\text{out}}$. 

\section{Exceptional line adapted parametrization of quasinormal modes}\label{sec: parameterization_EL}
Parametrized QNM formulations and tests provide a model-independent way to search for possible deviations of black hole spectra from the predictions of general relativity (GR)~\cite{Cardoso:2019mqo,McManus:2019ulj,Cano:2024jkd,Hirano:2024fgp,Kimura:2020mrh,Yu:2025wpb,Volkel:2026qqz,Chen:2026ack}. These conventional parametrizations rely, explicitly or implicitly, on the existence of a regular perturbative description of the QNM spectrum in the neighborhood of the reference configuration. For a QNM branch depending smoothly on a set of physical parameters $\mathbf{p}=(p_1,p_2,\ldots)$, one may therefore locally write
\begin{eqnarray}\label{eq:ordinary_param_QNM}
    \omega_{\ell mn}(\mathbf{p}+\delta\mathbf{p})=\omega_{\ell mn}(\mathbf{p})+\delta\omega_{\ell mn}(\mathbf{p})+\mathcal{O}(\|\delta\mathbf{p}\|^2)\, ,
\end{eqnarray}
where the deviation is assumed to depend analytically on the parameters $\mathbf{p}$, which means that
\begin{eqnarray}\label{eq:linear_QNM_deviation}
    \delta\omega_{\ell mn}(\mathbf{p})=\sum_a\frac{\partial\omega_{\ell mn}}{\partial p_a}\delta p_a\, .
\end{eqnarray}

Equation \eqref{eq:linear_QNM_deviation} assumes that the QNM spectrum is an analytic and single-valued function of the physical parameters. However, this assumption breaks down in the vicinity of an EL. From Eq. (\ref{sum_omega}) and Eq. (\ref{squared_difference_omega}), two spectra are obtained as
\begin{eqnarray}\label{eq:EP_QNM_general}
    \omega_\pm(\mathbf p)=\overline{\omega}(\mathbf p)\pm\frac{1}{2}\sqrt{h(\mathbf p)}\, ,\qquad\text{with}\qquad \overline{\omega}(\mathbf p)\equiv\frac{\omega_{+}(\mathbf p)+\omega_{-}(\mathbf p)}{2}\, .
\end{eqnarray}
Using Eq. \eqref{eq:EP_QNM_general}, one obtains
\begin{eqnarray}\label{eq:sensitivity_EL}
    \frac{\partial\omega_\pm}{\partial p_a}=\frac{\partial\overline{\omega}}{\partial p_a}\pm\frac{1}{4\sqrt{h}}\frac{\partial h}{\partial p_a}\, .
\end{eqnarray}
Near the exceptional line, $h\rightarrow0$ with $\partial h/\partial p_a$ being finite and hence
\begin{eqnarray}\label{eq:sensitivity_divergence}
    \Big|\frac{\partial\omega_\pm}{\partial p_a}\Big|\rightarrow\infty\, .
\end{eqnarray}
Therefore, the linear coefficients used in conventional parameterized QNM models become singular and cannot provide a proper description near an EL. This regular parametrization requires modification in the vicinity of an EL. 

The directional spectral sensitivity derived in Sec. \ref{sec: ELs} suggests a natural parametrization adapted to the local geometry of the EL. Since a displacement tangent to the EL and a displacement transverse to it have qualitatively different spectral consequences, it is advantageous to replace the original physical parameter coordinates by local coordinates constructed from the Frenet frame of the EL. Using the Frenet frame, any nearby point of EL in the parameter space can be decomposed as
\begin{eqnarray}
    \delta\mathbf p=q_t\mathbf T+q_n\mathbf N+q_b\mathbf B \, ,
\end{eqnarray}
where $\mathbf N$ and $\mathbf B$ are principal normal vector and auxiliary normal vector. The tangential displacement $q_t$ only changes the position along the exceptional line and therefore can be absorbed into the parameter $s$. Hence the local coordinates around the EL are chosen as
\begin{eqnarray}\label{eq:tube_coordinate}
    \mathbf p(s,q_n,q_b)=\mathbf p_{\text{EL}}(s)+q_n\mathbf N(s)+q_b\mathbf B(s)\, .
\end{eqnarray}
Here, $q_n$ and $q_b$ represent the two independent perturbation directions perpendicular to the EL. It is important to distinguish the roles of these coordinates. The coordinate $s$ labels different exceptional points along the EL and determines the evolution of the degenerate frequency $\omega_{\text{EP}}(s)$, whereas $(q_n,q_b)$ describe physical departures away from the degeneracy and are responsible for the splitting of the two QNM branches.

On the one hand, the average spectrum $\overline{\omega}$ is expanded around the EL,
\begin{eqnarray}\label{eq:omega_average_expansion}
    \overline{\omega}(s,q_n,q_b)=\omega_{\text{EP}}(s)+v_n(s)q_n+v_b(s)q_b+\mathcal{O}(q^2)\, ,
\end{eqnarray}
where 
\begin{eqnarray}
    v_n(s)=\frac{\partial\overline{\omega}}{\partial q_n}\Big|_{\text{EL}}\, ,\qquad v_b(s)=\frac{\partial\overline{\omega}}{\partial q_b}\Big|_{\text{EL}}\, .
\end{eqnarray}
On the other hand, the Taylor expansion of $h$ gives
\begin{eqnarray}\label{eq:h_expansion}
    h(s,q_n,q_b)=\alpha_n(s)q_n+\alpha_b(s)q_b+\frac12\beta_{nn}(s)q_n^2+\beta_{nb}(s)q_nq_b+\frac12\beta_{bb}(s)q_b^2+\cdots\, ,
\end{eqnarray}
in which the coefficients are
\begin{eqnarray}
    \alpha_n(s)=\frac{\partial h}{\partial q_n}\Big|_{\text{EL}}\, ,\quad  \alpha_b(s)=\frac{\partial h}{\partial q_b}\Big|_{\text{EL}}\, ,\quad \beta_{nn}=\frac{\partial^2 h}{\partial q_n^2}\Big|_{\text{EL}}\, ,\quad \beta_{nb}=\frac{\partial^2 h}{\partial q_n\partial q_b}\Big|_{\text{EL}}\, ,\quad \beta_{bb}=\frac{\partial^2 h}{\partial q_b^2}\Big|_{\text{EL}}\, .
\end{eqnarray}
Substituting Eqs. \eqref{eq:omega_average_expansion} and \eqref{eq:h_expansion} into Eq. \eqref{eq:EP_QNM_general}, the QNM spectra near the EL become
\begin{eqnarray}
    \omega_\pm(s,q_n,q_b)=\omega_{\text{EP}}(s)+v_n(s)q_n+v_b(s)q_b\pm\frac12\sqrt{\alpha_n(s)q_n+\alpha_b(s)q_b+\frac12\beta_{nn}(s)q_n^2+\beta_{nb}(s)q_nq_b+\frac12\beta_{bb}(s)q_b^2}\, ,
\end{eqnarray}
which is the generalized parameterized QNM model around an EL. If one is sufficiently close to the exceptional line: $q_n, q_b \ll 1$, neglecting second-order terms: $\beta_{ij} q_i q_j \approx 0$. Keeping only the leading-order terms gives
\begin{eqnarray}\label{eq:leading_EL_QNM}
    \omega_\pm(s,q_n,q_b)\simeq\omega_{\text{EP}}(s)+v_nq_n+v_bq_b\pm\frac12\sqrt{\alpha_nq_n+\alpha_bq_b}\, .
\end{eqnarray}
This is the parameterized QNMs formulation near the exceptional line. Note that $v_n$, $v_b$, $\alpha_n$ and $\alpha_b$ are viewed as the expansion bases.


\section{conclusions and discussion}\label{conclusions}
In this work, we have investigated the QNM spectrum and exceptional structures of the Reissner-Nordstr\"{o}m-de Sitter black hole surrounded by a static thin shell of matter. For simplicity, we have considered a conformal scalar field which does not interact directly with the shell. The interior and exterior geometries are both RN-dS spacetimes, while the shell provides an interface connecting the two regions and induces a relative rescaling of the time coordinates. Consequently, the frequencies measured in the two regions satisfy $\omega_{\rm out}=\eta(a)\omega_{\rm in}$. By imposing the matching conditions of the scalar field across the shell, together with the ingoing boundary condition at the event horizon and the outgoing boundary condition at the cosmological horizon, we obtained the QNM condition for the composite spacetime.

We first studied the migration of the QNM spectrum under variations of the shell radius and the exterior geometric parameters. The numerical results show that the higher overtones are considerably more sensitive to these variations than the low overtones. In particular, the $n=6$ and $n=7$ modes exhibit a mode exchange when the parameters $(a,r_{\text{c,out}})$ are varied. By encircling the corresponding parameter region, the two QNM branches are permuted after one cycle, which demonstrates the existence of an exceptional point. Combining the rectangular subdivision procedure with the nine-point method, we locate this EP. By promoting $r_{+,\text{out}}$ to an additional control parameter, the two-dimensional parameter space is extended to the three-dimensional space $(a,r_{\text{c,out}},r_{+,\text{out}})$. In this enlarged parameter space. Numerically, we obtain an exceptional line. Along the EL, both the location of the degeneracy in parameter space and the corresponding exceptional frequency vary smoothly. 

We further developed a local description of the directional spectral sensitivity around the EL. For a perturbation $\mathbf p=\mathbf p_0+\epsilon\widehat{\mathbf u}$ away from a point $\mathbf p_0$ on the EL, the splitting of the two QNM branches takes the universal square-root form $|\omega_+-\omega_-|=C_{\widehat{\mathbf u}}\sqrt{\epsilon}+o(\sqrt{\epsilon})$, where $C_{\widehat{\mathbf u}}=(\widehat{\mathbf u}^{T}\mathbf{K}\widehat{\mathbf u})^{1/4}$, and $\mathbf{K}=\nabla F_1\nabla F_1^{\mathrm T}+\nabla F_2\nabla F_2^{\mathrm T}$. The tangent vector of the EL belongs to the null space of $\mathbf{K}$, and therefore the leading square-root splitting vanishes identically for a perturbation tangent to the EL. In contrast, generic transverse perturbations produce the characteristic $\sqrt{\epsilon}$ splitting. We have also shown that the direction of maximum spectral sensitivity necessarily lies in the normal plane of the EL. The two nonzero eigenvalues $\lambda_+$ and $\lambda_-$ of $\mathbf{K}$ determine the maximum and minimum transverse sensitivities, respectively, $C_{\max}^{(\perp)}=\lambda_+^{1/4}$, $C_{\min}^{(\perp)}=\lambda_-^{1/4}$. This naturally leads to the anisotropy factor ${\mathcal A}=(\lambda_{+}/\lambda_{-})^{1/4}$. Our numerical results give ${\cal A}>1$ throughout the EL considered here. It reaches a shallow minimum of approximately $1.8$ and subsequently grows to about $2.3$, showing that the transverse spectral response is intrinsically anisotropic and that the anisotropy becomes stronger toward larger $r_{+,\text{out}}$.

Finally, we discussed the implication of the EL structure for parametrized QNM frameworks. Conventional linear parametrizations assume an analytic and single-valued dependence of a QNM spectrum on the physical parameters. Near an EL, however, $\omega_\pm=\bar{\omega}\pm\sqrt{h}/2$, so that the derivatives of the individual QNM branches generally diverge as $h\rightarrow0$. A regular Taylor expansion of each branch therefore ceases to be appropriate. Motivated by the local geometry of the EL, we introduced coordinates adapted to its Frenet frame, $\mathbf p(s,q_n,q_b)=\mathbf p_{\rm EL}(s)+q_n\mathbf N(s)+q_b\mathbf B(s)$, and obtained an EL-adapted parametrization of the QNM spectrum. To leading order, it takes the form (\ref{eq:leading_EL_QNM}). This parametrization explicitly incorporates both the geometry of the exceptional line and the nonanalytic square-root splitting of the QNM spectrum.

Our results demonstrate that exceptional lines provide a natural framework for organizing the strongly direction-dependent spectral response of black hole QNMs in multidimensional parameter spaces. They also show that the local geometry of exceptional structures has direct consequences for parametrized descriptions of black hole spectra. It would be interesting to extend the present analysis to gravitational and electromagnetic perturbations, more general thin-shell matter models (including the coupling situation), and rotating black holes. 


\section*{Acknowledgement}
This work is supported by the National Natural Science Foundation of China with grant No. 12505067. This work is also supported by the National Key R\&D Program of China grant No. 2022YFC2204603, by the National Natural Science Foundation of China with grants No. 12475063, No. 12075232.

\appendix
\section{Israel junction conditions}\label{app: Israel junction conditions}
A static thin shell denote by $\Sigma$ is placed at $r=a$, dividing spacetime into an interior region ($r < a$, metric function $f_{\text{in}}(r)$) and an exterior region ($r > a$, metric function $f_{\text{out}}(r)$). The induced metric denoted by $h_{ab}$ on the shell is written as
\begin{eqnarray}\label{induced_metric}
    \mathrm{d}s_\Sigma^2 = -\mathrm{d}\tau^2 + a^2 \mathrm{d}\Omega^2\, ,
\end{eqnarray}
where $\tau$ is the proper time on the shell, related to the coordinate time by $\mathrm{d}\tau=\sqrt{f(a)}\mathrm{d}t$. The unit normal vectors are
\begin{eqnarray}\label{normal_vectors}
    n^\mu=\Big(0,\sqrt{f(a)},0,0\Big)\, ,\quad n_\mu=\Big(0,1/\sqrt{f(a)},0,0\Big)\, ,
\end{eqnarray}
with $n^\mu n_\mu=1$. For both inside and outside spacetimes, the vector points at the direction of increasing $r$ coordinate ($\sqrt{f(a)}>0$). From Eq. (\ref{induced_metric}) and Eq. (\ref{normal_vectors}), we can obtain the extrinsic curvature $K_{ij}$ in which
\begin{eqnarray}\label{extrinsic_curvature}
    K_{\tau\tau}=-\frac{f^{\prime}(a)}{2\sqrt{f(a)}}\, ,\quad K_{\theta\theta}=a\sqrt{f(a)}\, ,\quad K_{\phi\phi}=a\sqrt{f(a)}\sin^2\theta\, ,\quad \text{and}\quad K=\frac{f^{\prime}(a)}{2\sqrt{f(a)}}+\frac{2\sqrt{f(a)}}{a}\, .
\end{eqnarray}
 
The Israel junction conditions, which relate the discontinuities in the extrinsic curvature $K_{ij}$ over the boundary of a thin surface of matter to its surface stress-energy tensor $S_{ij}$, can be expressed as~\cite{Israel:1966rt}
\begin{eqnarray}\label{Israel_junction_conditions}
    [[K_{ij}]]-h_{ij}[[K]]=-8\pi S_{ij}\, ,
\end{eqnarray}
where the discontinuities over the shell defined as
\begin{eqnarray}\label{discontinuities}
    [[A]]\equiv \Big(\lim_{r\to a^{+}}A\Big)-\Big(\lim_{r\to a^{-}}A\Big)\, .
\end{eqnarray}
From Eqs. (\ref{extrinsic_curvature}) and Eq. (\ref{Israel_junction_conditions}), we have two connection conditions as follows
\begin{eqnarray}
    \frac{2}{a}\Big[\Big[\sqrt{f(r)}\Big]\Big]=-8\pi S_{\tau\tau}\, ,\quad\text{and}\quad \frac{a^2}{2}\Big[\Big[\frac{f^{\prime}(r)}{\sqrt{f(r)}}\Big]\Big]+a\Big[\Big[\sqrt{f(r)}\Big]\Big]=8\pi S_{\theta\theta}\, .
\end{eqnarray}
For a perfect fluid,
$S_{ij}$ can be written as
\begin{eqnarray}
    S_{ij}=\text{diag}(\sigma,pa^2,pa^2\sin^2\theta)\, ,
\end{eqnarray}
where $\sigma$ is the surface energy density and $p$ is the surface tension.

\section{The Heun's equation}\label{app: Heun_equation}
In the appendix, we will solve Eq. (\ref{master_equation_frequency_domain}) by using the Heun's function method. It is not difficult to see that the differential equation (\ref{master_equation_frequency_domain}) has five regular singular points at $r=r_n$, $r=0$, $r=r_{-}$, $r=r_{+}$ and $r=r_\text{c}$. Note that the point $r=\infty$ is the regular point. In terms of $r$, Eq. (\ref{master_equation_frequency_domain}) becomes a standard form
\begin{eqnarray}\label{standard_equation}
    \frac{\mathrm{d}^2\Psi}{\mathrm{d}r^2}+p(r)\frac{\mathrm{d}\Psi}{\mathrm{d}r}+q(r)\Psi=0\, ,
\end{eqnarray}
where the functions $p(r)$ and $q(r)$ read
\begin{eqnarray}
    p(r)=\frac{f^{\prime}(r)}{f(r)}\, ,\quad q(r)=\frac{\omega^2-V(r)}{f^2(r)}\, . 
\end{eqnarray}
Here, $\prime$ is the derivative with respect to $r$. It is known that general Heun's equation has four regular singular points including the infinity. However, this equation has five regular singular points. Therefore, it is necessary for us to eliminate a singular point. Moreover, not all regular singular points can be eliminated through analytical transformations. We can use the Liouville invariant to discriminate this. The Liouville invariant is defined as
\begin{eqnarray}\label{Liouville_invariant}
    I(r)\equiv q(r)-\frac12 p^{\prime}(r)-\frac14\Big[p(r)\Big]^2\, .
\end{eqnarray}
If $I(r)$ is analytical at some regular singular point, then one can do a transformation to eliminate such regular singular point. After some calculation, we find that although $r=0$ is a regular singular point, the Liouville invariant is analytical at $r=0$. We consider the M\"{o}bius transformation as follow
\begin{eqnarray}\label{Mobius_transformation}
    z(r)=\frac{(r_\text{c}-r_{-})(r-r_{+})}{(r_\text{c}-r_{+})(r-r_{-})}\, .
\end{eqnarray}
In this transformation, $(r_{\text{c}},r_{+},r_{-},r_n,0)$ are mapped into $(1,0,\infty,z_n,z_0)$, respectively, where
\begin{eqnarray}\label{z_n_and_z_0}
    z_n=\frac{(r_{\text{c}}-r_{-})(r_n-r_{+})}{(r_{\text{c}}-r_{+})(r_n-r_{-})}>1\, ,\quad \text{and}\quad z_0=\frac{(r_{\text{c}}-r_{-})r_{+}}{(r_\text{c}-r_{+})r_{-}}\, .
\end{eqnarray}
We then further take the following transformation for $\Psi$, namely
\begin{eqnarray}
    \Psi(z)=z^{\rho_{+,1}}(z-1)^{\rho_{\text{c},1}}(z-z_n)^{\rho_{n,1}}(z-z_0)u(z)\, ,
\end{eqnarray}
where the scaling factor $z-z_0$ is used to eliminate the regular singular point $z=z_0$. Indices $\rho_{+,1}$, $\rho_{\text{c},1}$ and $\rho_{n,1}$ are shown in Tab. \ref{indices}. Note that similar operations also occur in the cases of Kerr-de Sitter black holes~\cite{Hatsuda:2020sbn,Oshita:2021iyn} and global monopole Reissner-Nordstr\"{o}m-de Sitter black hole~\cite{Li:2026zsg}. The new function $u(z)$ will satisfy the Heun's equation
\begin{eqnarray}\label{Heun_equation}
    \frac{\mathrm{d}^2u(z)}{\mathrm{d}z^2}+\Big(\frac{\gamma}{z}+\frac{\delta}{z-1}+\frac{\epsilon}{z-z_n}\Big)\frac{\mathrm{d}u(z)}{\mathrm{d}z}+\frac{(\alpha\beta z-q)u(z)}{z(z-1)(z-z_n)}=0\, ,
\end{eqnarray}
where $\gamma+\delta+\epsilon=\alpha+\beta+1$. We give six parameters in the above Heun's equation, where the condition $r_n=-(r_{+}+r_{-}+r_{\text{c}})$ has been used. Note that $\alpha$ and $\beta$ can be exchanged. To increase readability, we explicitly provide expressions of six parameters $\gamma$, $\delta$, and $\epsilon$, $\alpha$, $\beta$ and $q$,
\begin{eqnarray}\label{gamma}
    \gamma(\omega)=\frac{\Lambda  r_{\text{c}}^2 (r_{+}-r_{-})+\Lambda  r_{\text{c}} (r_{+}^2-r_{-}^2)+r_{+} (\Lambda r_{-}^2+\Lambda  r_{-}r_{+}-2 \Lambda r_{+}^2+6 \mathrm{i} r_{+} \omega )}{\Lambda  (r_{\text{c}}-r_{+}) (r_{+}-r_{-}) (r_{\text{c}}+r_{-}+2 r_{+})}\, ,
\end{eqnarray}
\begin{eqnarray}\label{delta}
    \delta(\omega)=\frac{2\Lambda r_{\text{c}}^3-r_{\text{c}}^2 (\Lambda r_{-}+\Lambda  r_{+}+6 \mathrm{i}\omega)-\Lambda r_{\text{c}}(r_{-}^2+r_{+}^2)+\Lambda  r_{-} r_{+} (r_{-}+r_{+})}{\Lambda (r_{\text{c}}-r_{-}) (r_{\text{c}}-r_{+})(2r_{\text{c}}+r_{-}+r_{+})}\, ,
\end{eqnarray}
\begin{eqnarray}\label{epsilon}
    \epsilon(\omega)&=&\Bigg\{\Lambda \Big[2 r_{\text{c}}^3+7 r_{\text{c}}^2 (r_{-}+r_{+})+r_{\text{c}}\Big(7 r_{-}^2+16 r_{-} r_{+}+7 r_{+}^2\Big)+2 r_{-}^3+7 r_{-}^2 r_{+}+7r_{-} r_{+}^2+2 r_{+}^3\Big]\Bigg\}^{-1}\nonumber\\
    &&\times\Bigg\{2 \Lambda  r_{\text{c}}^3+r_{\text{c}}^2 \Big(7 \Lambda  r_{-}+7 \Lambda r_{+}+6 \mathrm{i} \omega \Big)+r_{\text{c}} \Big[7 \Lambda r_{-}^2+4 r_{-}(4 \Lambda  r_{+}+3 \mathrm{i} \omega )+r_{+}(7 \Lambda r_{+}+12 \mathrm{i} \omega )\Big]\nonumber\\
    &&+(r_{-}+r_{+}) \Big(2 \Lambda r_{-}^2+5 \Lambda  r_{-} r_{+}+6 \mathrm{i} r_{-} \omega +2 \Lambda r_{+}^2+6 \mathrm{i} r_{+}\omega \Big)\Bigg\}\, ,
\end{eqnarray}
\begin{eqnarray}\label{q}
    q(\omega)=\frac{-3 \ell^2 (r_{-}-r_{+})-3 \ell (r_{-}-r_{+})+\Lambda  r_{\text{c}}^2 (r_{+}-r_{-})+\Lambda  r_{\text{c}} (r_{+}^2-r_{-}^2)+r_{-} r_{+}(\Lambda r_{-}-\Lambda r_{+}+6 \mathrm{i} \omega )}{\Lambda  (r_{\text{c}}-r_{+}) (r_{+}-r_{-}) (r_{\text{c}}+2 r_{-}+r_{+})}\, ,
\end{eqnarray}
and
\begin{eqnarray}\label{alpha_and_beta}
    \alpha=1\, ,\quad \beta(\omega)=\frac{\Lambda  r_{\text{c}}^2 (r_{-}-r_{+})+\Lambda r_{\text{c}}(r_{-}^2-r_{+}^2)+r_{-}(-2 \Lambda r_{-}^2+\Lambda  r_{-} r_{+}-6 \mathrm{i} r_{-} \omega +\Lambda r_{+}^2)}{\Lambda  (r_{\text{c}}-r_{-}) (r_{-}-r_{+}) (r_{\text{c}}+2 r_{-}+r_{+})}\, ,
\end{eqnarray}
where $\Lambda$ is given by Eq. (\ref{cosmological_constant}). As $r_{-}\to0$, RN-dS black hole becomes SdS black hole and the above six parameters reduce to the corresponding parameters in SdS black hole case~\cite{Wu:2025wbp}.

\begin{table}[t]
\setlength{\tabcolsep}{12pt} 
 \centering
\begin{tabular}{c c c c c}
 \hline\hline
Regular singular points~~~~~~& $a_0$ ~~~~~~& $b_0$ ~~~~~~&Indices $(\rho_1,\rho_2)$ ~~~~~~&  \\
 \hline\noalign{\vskip 5pt}
$r=r_{n}$~~~~~~& $1$ ~~~~~~& $\displaystyle \frac{\omega^2}{\Big[f^{\prime}(r_{n})\Big]^2}$ ~~~~~~&$\displaystyle \Bigg(+\frac{\mathrm{i}\omega}{f^{\prime}(r_n)}\, ,-\frac{\mathrm{i}\omega}{f^{\prime}(r_n)}\Bigg)$ ~~~~~~&  \\\noalign{\vskip 5pt}
\hline\noalign{\vskip 5pt}
$r=0$~~~~~~& $-2$ ~~~~~~& $2$ ~~~~~~&$\displaystyle (2, 1)$ ~~~~~~&  \\\noalign{\vskip 5pt}
\hline\noalign{\vskip 5pt}
$r=r_{-}$~~~~~~& $1$ ~~~~~~& $\displaystyle \frac{\omega^2}{\Big[f^{\prime}(r_{-})\Big]^2}$ ~~~~~~&$\displaystyle \Bigg(+\frac{\mathrm{i}\omega}{f^{\prime}(r_{-})}\, ,-\frac{\mathrm{i}\omega}{f^{\prime}(r_{-})}\Bigg)$ ~~~~~~&  \\\noalign{\vskip 5pt}
\hline\noalign{\vskip 5pt}
$r=r_{+}$~~~~~~& $1$ ~~~~~~& $\displaystyle \frac{\omega^2}{\Big[f^{\prime}(r_{+})\Big]^2}$ ~~~~~~&$\displaystyle \Bigg(+\frac{\mathrm{i}\omega}{f^{\prime}(r_+)}\, ,-\frac{\mathrm{i}\omega}{f^{\prime}(r_+)}\Bigg)$ ~~~~~~&  \\\noalign{\vskip 5pt}
\hline\noalign{\vskip 5pt}
$r=r_{\text{c}}$~~~~~~& $1$ ~~~~~~& $\displaystyle \frac{\omega^2}{\Big[f^{\prime}(r_{\text{c}})\Big]^2}$ ~~~~~~&$\displaystyle \Bigg(+\frac{\mathrm{i}\omega}{f^{\prime}(r_{\text{c}})}\, ,-\frac{\mathrm{i}\omega}{f^{\prime}(r_{\text{c}})}\Bigg)$ ~~~~~~&  \\\noalign{\vskip 5pt}
\hline\hline
\end{tabular}
\caption{Regular singular points of Eq. (\ref{standard_equation}), the corresponding coefficients $a_0$, $b_0$ in the index equation, $\rho(\rho-1)+a_0\rho+b_0=0$, and the associated indices $(\rho_1,\rho_2)$ for conformal scalar perturbation. The sum of all indices are $3$, which is satisfied with Fuchs condition.}
\label{indices}
\end{table}

Considering the QNM boundary conditions, which state that the solution is ingoing at the event horizon and the solution is outgoing at the cosmological horizon, one can find a solution $\Psi_{\text{in}}(\omega_{\text{in}},r)$ that satisfies ingoing condition at event horizon. For $r_{+,\text{in}}<r<a$,
\begin{eqnarray}\label{Psi_in}
    \Psi_{\text{in}}(\omega_{\text{in}},r)&=&(z_{\text{in}})^{\rho_{+,1,\text{in}}}(z_{\text{in}}-1)^{\rho_{\text{c},1,\text{in}}}(z_{\text{in}}-z_{n,\text{in}})^{\rho_{n,1,\text{in}}}(z_{\text{in}}-z_{0,\text{in}})\nonumber\\
    &&\times z_{\text{in}}^{1-\gamma_{\text{in}}}Hl(z_{n,\text{in}},(z_{n,\text{in}}\delta_{\text{in}}+\epsilon_{\text{in}})(1-\gamma_{\text{in}})+q_{\text{in}};\alpha_{\text{in}}+1-\gamma_{\text{in}},\beta_{\text{in}}+1-\gamma_{\text{in}},2-\gamma_{\text{in}},\delta_{\text{in}};z_{\text{in}})
\end{eqnarray}
with
\begin{eqnarray}\label{z_in}
        z_{\text{in}}=\frac{(r_\text{c,in}-r_{-,\text{in}})(r-r_{+,\text{in}})}{(r_\text{c,in}-r_{+,\text{in}})(r-r_{-,\text{in}})}\, .
\end{eqnarray}
For $a<r<r_{\text{c,out}}$, a solution $\Psi_{\text{out}}(\omega_{\text{out}},r)$ that satisfies outgoing condition at cosmological horizon, which is given by
\begin{eqnarray}\label{Psi_out}
    \Psi_{\text{out}}(\omega_{\text{out}},r)&=&(z_{\text{out}})^{\rho_{+,1,\text{out}}}(z_{\text{out}}-1)^{\rho_{\text{c},1,\text{out}}}(z_{\text{out}}-z_{n,\text{out}})^{\rho_{n,1,\text{out}}}(z_{\text{out}}-z_{0,\text{out}})\nonumber\\
    &&\times Hl(1-z_{n,\text{out}},\alpha_{\text{out}}\beta_{\text{out}}-q_{\text{out}};\alpha_{\text{out}},\beta_{\text{out}},\delta_{\text{out}},\gamma_{\text{out}};1-z_{\text{out}})
\end{eqnarray}
with
\begin{eqnarray}\label{z_out}
    z_{\text{out}}=\frac{(r_\text{c,\text{out}}-r_{-,\text{out}})(r-r_{+,\text{out}})}{(r_\text{c,out}-r_{+,\text{out}})(r-r_{-,\text{out}})}\, .
\end{eqnarray}

\section{Maximum spectral sensitivity direction}
\label{thm: maximum_sensitivity_direction}
\begin{theorem}[Maximum spectral sensitivity direction]
Let the exceptional line $\mathcal{L}_{\mathrm{EP}}$ in a three-dimensional parameter space be locally defined by
\begin{eqnarray}
    F_1(\mathbf{p})=0\, ,
    \qquad
    F_2(\mathbf{p})=0\, ,
\end{eqnarray}
where $F_1,F_2:\mathbb{R}^{3}\rightarrow\mathbb{R}$ are smooth functions. Suppose that, at a regular point $\mathbf{p}_0\in\mathcal{L}_{\mathrm{EP}}$, the gradients $\nabla F_1(\mathbf{p}_0)$ and $\nabla F_2(\mathbf{p}_0)$ are linearly independent. Define
\begin{eqnarray}\label{eq:K_theorem}
    \mathbf{K}=\nabla F_1\nabla F_1^{\mathrm{T}}+\nabla F_2\nabla F_2^{\mathrm{T}}\, ,
\end{eqnarray}
where all gradients are evaluated at $\mathbf{p}_0$, and define the directional spectral sensitivity coefficient by
\begin{eqnarray}\label{eq:C_theorem}
    C_{\widehat{\mathbf{u}}}=\Big(\widehat{\mathbf{u}}^{\mathrm{T}}\mathbf{K}\widehat{\mathbf{u}}
    \Big)^{1/4}\, ,
    \qquad
    \|\widehat{\mathbf{u}}\|=1\, .
\end{eqnarray}
Then every direction maximizing $C_{\widehat{\mathbf{u}}}$ is orthogonal to the tangent vector $\mathbf{T}$ of the exceptional line, namely,
\begin{eqnarray}
    \widehat{\mathbf{u}}_{\max}\cdot\mathbf{T}=0\, .
\end{eqnarray}
Therefore, the maximum-sensitivity direction lies in the normal plane of the exceptional line. Moreover, if $\lambda_{+}$ denotes the largest eigenvalue of $\mathbf{K}$, then
\begin{eqnarray}
    C_{\max}=\lambda_{+}^{1/4}\, ,
\end{eqnarray}
and $\widehat{\mathbf{u}}_{\max}$ is a normalized eigenvector of $\mathbf{K}$ corresponding to $\lambda_{+}$.
\end{theorem}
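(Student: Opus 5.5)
The argument is a Rayleigh-quotient computation for the real symmetric positive semidefinite matrix $\mathbf{K}$, combined with the fact that $\mathbf{T}$ spans its kernel. Write $\mathbf{g}_1=\nabla F_1(\mathbf{p}_0)$ and $\mathbf{g}_2=\nabla F_2(\mathbf{p}_0)$. By the implicit function theorem, linear independence of $\mathbf{g}_1,\mathbf{g}_2$ makes $\mathcal{L}_{\mathrm{EP}}$ a smooth curve near $\mathbf{p}_0$. Its tangent is orthogonal to both gradients, so $\mathbf{T}=\mathbf{g}_1\times\mathbf{g}_2/\|\mathbf{g}_1\times\mathbf{g}_2\|$, as in Eq. (\ref{eq:tangent_sensitivity_zero}).

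\textbf{Quadratic form and eigenvalues.} For any $\mathbf{v}$ we have
\[
\mathbf{v}^{\mathrm T}\mathbf{K}\mathbf{v}=(\mathbf{g}_1\cdot\mathbf{v})^2+(\mathbf{g}_2\cdot\mathbf{v})^2\geq 0 .
\]
Equality holds iff $\mathbf{v}\perp\mathbf{g}_1$ and $\mathbf{v}\perp\mathbf{g}_2$, i.e. iff $\mathbf{v}\parallel\mathbf{T}$. Hence $\ker\mathbf{K}=\mathrm{span}\{\mathbf{T}\}$. Since $\mathbf{K}$ is symmetric, its range is $\mathrm{span}\{\mathbf{g}_1,\mathbf{g}_2\}=\mathbf{T}^{\perp}$, the normal plane. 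So the spectral theorem gives an orthonormal eigenbasis $\{\mathbf{T},\mathbf{e}_-,\mathbf{e}_+\}$ with eigenvalues $0<\lambda_-\le\lambda_+$. Strict positivity of $\lambda_-$ holds because $\mathbf{K}$ restricted to $\mathbf{T}^\perp$ has trivial kernel.

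\textbf{Maximization.} Since $x\mapsto x^{1/4}$ is strictly increasing on $[0,\infty)$, maximizing $C_{\widehat{\mathbf u}}$ is the same as maximizing the Rayleigh quotient $\widehat{\mathbf u}^{\mathrm T}\mathbf{K}\widehat{\mathbf u}$ on the unit sphere. Expand $\widehat{\mathbf u}=c_0\mathbf{T}+c_-\mathbf{e}_-+c_+\mathbf{e}_+$ with $c_0^2+c_-^2+c_+^2=1$. Then
\[
\widehat{\mathbf u}^{\mathrm T}\mathbf{K}\widehat{\mathbf u}=\lambda_-c_-^2+\lambda_+c_+^2\le\lambda_+(1-c_0^2)\le\lambda_+ .
\]
The maximum $\lambda_+$ is attained at $\widehat{\mathbf u}=\mathbf{e}_+$, so $C_{\max}=\lambda_+^{1/4}$.

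\textbf{Characterizing the maximizers.} Equality in the bound requires $\lambda_+ c_0^2=0$ and $(\lambda_+-\lambda_-)c_-^2=0$. Because $\lambda_+>0$, the first forces $c_0=0$, i.e. $\widehat{\mathbf u}_{\max}\cdot\mathbf{T}=0$. The second forces $c_-=0$ when $\lambda_-<\lambda_+$. If instead $\lambda_-=\lambda_+$, every unit vector in the normal plane is an eigenvector for $\lambda_+$. In both cases $\widehat{\mathbf u}_{\max}$ lies in the $\lambda_+$-eigenspace, so it is a normalized eigenvector of $\mathbf{K}$ for $\lambda_+$.

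The only point that needs care is the strictness $\lambda_+>0$, which is what excludes a tangential component $c_0\ne0$ in a maximizer. It follows from the linear independence hypothesis: $\mathbf{K}$ has rank exactly two, so it is not identically zero. Everything else is the standard Rayleigh-quotient argument.
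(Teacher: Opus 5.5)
Your proposal is correct and follows essentially the same route as the paper's proof: both rest on $\mathbf{K}\mathbf{T}=0$, on the bound $\widehat{\mathbf u}^{\mathrm T}\mathbf{K}\widehat{\mathbf u}\le(1-c_0^2)\lambda_+$ with $\lambda_+>0$ to exclude a tangential component, and on the Rayleigh quotient characterization of the maximum. The only difference is that you expand in an explicit orthonormal eigenbasis $\{\mathbf{T},\mathbf{e}_-,\mathbf{e}_+\}$ and read off the equality cases directly, which also justifies $\ker\mathbf{K}=\mathrm{span}\{\mathbf{T}\}$ and $\lambda_->0$ (the paper merely asserts these); the paper instead splits $\widehat{\mathbf u}=a\mathbf{T}+\mathbf{u}_\perp$ and then cites the Rayleigh--Ritz theorem.
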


\begin{proof}
Since $\nabla F_1$ and $\nabla F_2$ are linearly independent, the unit tangent vector of the exceptional line is
\begin{eqnarray}
    \mathbf{T}=\frac{\nabla F_1\times\nabla F_2}{\|\nabla F_1\times\nabla F_2\|}\, .
\end{eqnarray}
It follows that
\begin{eqnarray}\label{eq:grad_T_zero}
    \nabla F_1\cdot\mathbf{T}=0\, ,
    \qquad
    \nabla F_2\cdot\mathbf{T}=0\, .
\end{eqnarray}
Using the definition of $\mathbf{K}$, we obtain
\begin{eqnarray}\label{eq:KT_zero}
    \mathbf{K}\mathbf{T}=(
        \nabla F_1\nabla F_1^{\mathrm{T}}
        +\nabla F_2\nabla F_2^{\mathrm{T}})\mathbf{T}=\nabla F_1(\nabla F_1\cdot\mathbf{T})+\nabla F_2(\nabla F_2\cdot\mathbf{T})=0\, .
\end{eqnarray}
Thus, $\mathbf{T}$ belongs to the null space of $\mathbf{K}$.

Let $\widehat{\mathbf{u}}$ be an arbitrary unit vector. It admits the orthogonal decomposition
\begin{eqnarray}\label{eq:u_decomposition}
    \widehat{\mathbf{u}}=a\mathbf{T}+\mathbf{u}_{\perp}\, ,
    \qquad \mathbf{u}_{\perp}\cdot\mathbf{T}=0\, ,
\end{eqnarray}
where
\begin{eqnarray}\label{eq:u_normalization}
    a^2+\|\mathbf{u}_{\perp}\|^2=1\, .
\end{eqnarray}
Substituting Eq. \eqref{eq:u_decomposition} into the quadratic form gives
\begin{eqnarray}
    \widehat{\mathbf{u}}^{\mathrm{T}}
    \mathbf{K}\widehat{\mathbf{u}}=(
        a\mathbf{T}+\mathbf{u}_{\perp})^{\mathrm{T}}
    \mathbf{K}(a\mathbf{T}+\mathbf{u}_{\perp})=
    a^2\mathbf{T}^{\mathrm{T}}\mathbf{K}\mathbf{T}+a\mathbf{T}^{\mathrm{T}}\mathbf{K}\mathbf{u}_{\perp}+a\mathbf{u}_{\perp}^{\mathrm{T}}\mathbf{K}\mathbf{T}
    +\mathbf{u}_{\perp}^{\mathrm{T}}
    \mathbf{K}
    \mathbf{u}_{\perp}\, .
\end{eqnarray}
Since $\mathbf{K}\mathbf{T}=0$ and $\mathbf{K}$ is symmetric, all terms containing $\mathbf{T}$ vanish. Hence,
\begin{eqnarray}\label{eq:Q_normal_component}
    \widehat{\mathbf{u}}^{\mathrm{T}}
    \mathbf{K}\widehat{\mathbf{u}}=\mathbf{u}_{\perp}^{\mathrm{T}}\mathbf{K}\mathbf{u}_{\perp}\, .
\end{eqnarray}

For $\mathbf{u}_{\perp}\neq0$, define
\begin{eqnarray}
    \widehat{\mathbf{v}}=\frac{\mathbf{u}_{\perp}}{\|\mathbf{u}_{\perp}\|}\, .
\end{eqnarray}
Then $\widehat{\mathbf{v}}$ is a unit vector in the normal plane and
\begin{eqnarray}
    \mathbf{u}_{\perp}=\sqrt{1-a^2}\,
    \widehat{\mathbf{v}}\, .
\end{eqnarray}
Therefore,
\begin{eqnarray}\label{eq:Q_factor}
    \widehat{\mathbf{u}}^{\mathrm{T}}\mathbf{K}\widehat{\mathbf{u}}=(1-a^2)\widehat{\mathbf{v}}^{\mathrm{T}}\mathbf{K}\widehat{\mathbf{v}}\, .
\end{eqnarray}

Because $\mathbf{K}$ is positive semidefinite,
\begin{eqnarray}
    \widehat{\mathbf{v}}^{\mathrm{T}}
    \mathbf{K}
    \widehat{\mathbf{v}}
    \geq 0\, .
\end{eqnarray}
At a regular point, $\nabla F_1$ and $\nabla F_2$ are linearly independent, so $\mathbf{K}$ has two positive eigenvalues in the normal plane. Consequently, its largest eigenvalue satisfies $\lambda_{+}>0$. If $a\neq0$, then $1-a^2<1$, and Eq. \eqref{eq:Q_factor} implies
\begin{eqnarray}
    \widehat{\mathbf{u}}^{\mathrm{T}}\mathbf{K}\widehat{\mathbf{u}}
    \leq(1-a^2)\lambda_{+}<\lambda_{+}\, .
\end{eqnarray}
Hence, no unit vector with a nonzero tangential component can maximize the
quadratic form. A maximizing direction must therefore satisfy
\begin{eqnarray}
    a=0\, ,
\end{eqnarray}
and thus
\begin{eqnarray}
    \widehat{\mathbf{u}}_{\max}
    =
    \mathbf{u}_{\perp}\, ,
    \qquad
    \widehat{\mathbf{u}}_{\max}\cdot\mathbf{T}=0\, .
\end{eqnarray}
Therefore, every maximum-sensitivity direction lies in the normal plane of the exceptional line.

Finally, since $\mathbf{K}$ is a real symmetric matrix, the Rayleigh-Ritz theorem gives
\begin{eqnarray}
    \max_{\|\widehat{\mathbf{u}}\|=1}\widehat{\mathbf{u}}^{\mathrm{T}}\mathbf{K}\widehat{\mathbf{u}}=\lambda_{+}\, .
\end{eqnarray}
The maximum is attained precisely by normalized eigenvectors satisfying
\begin{eqnarray}
    \mathbf{K}\widehat{\mathbf{u}}_{\max}=\lambda_{+}\widehat{\mathbf{u}}_{\max}\, .
\end{eqnarray}
Since the fourth-root function is strictly increasing on $[0,\infty)$, it follows that
\begin{eqnarray}
    C_{\max}=(\lambda_{+})^{1/4}\, .
\end{eqnarray}
This completes the proof.
\end{proof}

\section{Two eigenvalues with their numerical evaluations}\label{app: two_eigenvalues}
In this appendix, we give the expressions $\lambda_{+}$ and $\lambda_{-}$ of two eigenvalues of the matrix $\mathbf{K}$ and their numerical evaluations. Define $\mathbf{a}:=\nabla F_1(\mathbf{p}_0)$ and $\mathbf{b}:=\nabla F_2(\mathbf{p}_0)$. Then one has $\mathbf{K}=\mathbf{a}\mathbf{a}^{\mathrm{T}}+\mathbf{b}\mathbf{b}^{\mathrm{T}}$. The two nonzero eigenvalues of $\mathbf{K}$ are
\begin{eqnarray}\label{eq:lambda_pm_explicit}
    \lambda_{\pm}=\frac{1}{2}\Big[
        \|\mathbf{a}\|^2+\|\mathbf{b}\|^2\pm\sqrt{(\|\mathbf{a}\|^2-\|\mathbf{b}\|^2)^2+4(\mathbf{a}\cdot\mathbf{b})^2}
    \Big]\, ,
\end{eqnarray}
where $\|\cdot\|$ is the Euclidean norm.

Since $\mathbf{a}=(\operatorname{Re}h_{\mu},\operatorname{Re}h_{\nu},\operatorname{Re}h_{\xi})$, and $\mathbf{b}=(\operatorname{Im}h_{\mu},\operatorname{Im}h_{\nu},\operatorname{Im}h_{\xi})$, we should calculate gradient of function $h$ on EL. Based on Eq. (\ref{squared_difference_omega}), around the EL, we locally write the squared eigenvalue difference in $3$-dimensional parameter space $(\mu,\nu,\xi)$ as
\begin{eqnarray}\label{h_function_3D}
    h(\mu,\nu,\xi)&=&D(\xi)+E(\xi)\Big[\mu-\mu_0(\xi)\Big]+F(\xi)\Big[\nu-\nu_0(\xi)\Big]\nonumber\\
    &&+G(\xi)\Big[\mu-\mu_0(\xi)\Big]^2+H(\xi)\Big[\mu-\mu_0(\xi)\Big]\Big[\nu-\nu_0(\xi)\Big]+I(\xi)\Big[\nu-\nu_0(\xi)\Big]^2\, .
\end{eqnarray}
From the above equation, we have the derivatives of $h(\mu,\nu,\xi)$, namely
\begin{eqnarray}\label{h_mu}
    h_{\mu}=E(\xi)+2G(\xi)\Big[\mu-\mu_0(\xi)\Big]+H(\xi)\Big[\nu-\nu_0(\xi)\Big]\, ,
\end{eqnarray}
\begin{eqnarray}\label{h_nu}
    h_{\nu}=F(\xi)+H(\xi)\Big[\mu-\mu_0(\xi)\Big]+2I(\xi)\Big[\nu-\nu_0(\xi)\Big]\, ,
\end{eqnarray}
and
\begin{eqnarray}\label{h_xi}
    h_{\xi}&=&D^{\prime}+E^{\prime}\Big[\mu-\mu_0(\xi)\Big]+F^{\prime}\Big[\nu-\nu_0(\xi)\Big]+G^{\prime}\Big[\mu-\mu_0(\xi)\Big]^2+H^{\prime}\Big[\mu-\mu_0(\xi)\Big]\Big[\nu-\nu_0(\xi)\Big]+I^{\prime}\Big[\nu-\nu_0(\xi)\Big]^2\nonumber\\
    &&-\mu_0^{\prime}(\xi)\Bigg\{E+2G\Big[\mu-\mu_0(\xi)\Big]+H\Big[\nu-\nu_0(\xi)\Big]\Bigg\}-\nu_0^{\prime}(\xi)\Bigg\{F+H\Big[\mu-\mu_0(\xi)\Big]+2I\Big[\nu-\nu_0(\xi)\Big]\Bigg\}\, .
\end{eqnarray}

From Eq. (\ref{h_mu}) and Eq. (\ref{h_nu}), on each EPs $(\mu_{\text{EP},i},\nu_{\text{EP},i},\xi_i)$ of EL, one has
\begin{eqnarray}
    h_{\mu,i}=E(\xi_i)+2G(\xi_i)\Big[\mu_{\text{EP},i}-\mu_0(\xi_i)\Big]+H(\xi_i)\Big[\nu_{\text{EP},i}-\nu_0(\xi_i)\Big]\, ,
\end{eqnarray}
and
\begin{eqnarray}
    h_{\nu,i}=F(\xi_i)+H(\xi_i)\Big[\mu_{\text{EP},i}-\mu_0(\xi_i)\Big]+2I(\xi_i)\Big[\nu_{\text{EP},i}-\nu_0(\xi_i)\Big]\, .
\end{eqnarray}
Although Eq. (\ref{h_xi}) is correct, numerically it is necessary to calculate the derivatives of eight functions separately, and the errors of these derivatives will collectively enter $h_{\xi}$. A more stable method is to directly utilize the following layer local functions,
\begin{eqnarray}\label{h_i_mu_nu}
    h_i(\mu,\nu)&=&D(\xi_i)+E(\xi_i)\Big[\mu-\mu_0(\xi_i)\Big]+F(\xi_i)\Big[\nu-\nu_0(\xi_i)\Big]\nonumber\\
    &&+G(\xi_i)\Big[\mu-\mu_0(\xi_i)\Big]^2+H(\xi_i)\Big[\mu-\mu_0(\xi_i)\Big]\Big[\nu-\nu_0(\xi_i)\Big]+I(\xi_i)\Big[\nu-\nu_0(\xi_i)\Big]^2\, .
\end{eqnarray}
In order to get $h_\xi$, the fourth-order finite difference approximation for $h_\xi(\mu_{\text{EP},i},\nu_{\text{EP},i},\xi_i)$ is used from the above layer local functions. In summary, for the sake of getting the gradient of function $h$, we need to know the values of $\mu_{\text{EP},i}$, $\nu_{\text{EP},i}$, $D(\xi_i)$, $E(\xi_i)$, $F(\xi_i)$, $G(\xi_i)$, $H(\xi_i)$, $I(\xi_i)$, $\mu_0(\xi_i)$ and $\nu_0(\xi_i)$ at each $\xi_i$. Note that one must keep $\mu=\mu_{\text{EP}}$ and $\nu=\nu_{\text{EP}}$ fixed for evaluations of $h_\xi$ at each $\xi_i$. 

\bibliography{reference}
\bibliographystyle{apsrev4-1}

\end{document}